\newcommand{\IC}{\mathbb{C}}
\newcommand{\IR}{\mathbb{R}}
\newcommand{\ILL}{\mathscr{L}}
\newcommand{\IFF}{\mathscr{F}}
\newcommand{\IN}{\mathbb{N}}
\newcommand{\IZ}{\mathbb{Z}}
\newcommand*{\longhookrightarrow}%
               {\ensuremath{\lhook\joinrel\relbar\joinrel\rightarrow}}
\newcommand{\pa}{\slash\slash}
\newcommand{\Id}{{\rm d}}
\newcommand{\f}{\frac}
\newcommand{\nn}{\nonumber}
\newcommand{\sm}{\sim_b}
\theoremstyle{plain}            % body italics
\newtheorem{theorem}{theorem}[section]
\newtheorem{Lemma}[theorem]{Lemma}
\newtheorem{Corollary}[theorem]{Corollary}
\newtheorem{Theorem}[theorem]{Theorem}
\newtheorem{Propandef}[theorem]{Proposition and definition}
\theoremstyle{definition}       % body roman
\newtheorem{Remark}[theorem]{Remark}
\begin{document}

 \title[Semiclassical limits on infinite graphs]{Semiclassical limits of quantum partition functions on infinite graphs}

   \author{Batu G\"uneysu}
\address{Institut f\"ur Mathematik, Humboldt-Universit\"at zu Berlin, Germany. E-mail: gueneysu@math.hu-berlin.de}

\begin{abstract}
We prove that if $H$ denotes the operator corresponding to the canonical Dirichlet form on a possibly locally infinite weighted graph $(X,b,m)$, and if $v:X\to \mathbb{R}$ is such that $H+v/\hbar$ is well-defined as a form sum for all $\hbar >0$, then the quantum partition function $\mathrm{tr}(\mathrm{e}^{-\beta \hbar ( H + v/\hbar)})$ satisfies 
$$
\mathrm{tr}(\mathrm{e}^{-\beta \hbar ( H + v/\hbar)})\xrightarrow[]{\hbar\to 0+}\sum_{x\in X} \mathrm{e}^{-\beta v(x)} \>\text{  for all $\beta>0$},
$$
regardless of the fact whether $\mathrm{e}^{-\beta v}$ is apriori summable or not. We also prove natural generalizations of this semiclassical limit to a large class of covariant Schrödinger operators that act on sections in Hermitian vector bundle over $(X,m,b)$, a result that particularly applies to magnetic Schrödinger operators that are defined on $(X,m,b)$. 
 \end{abstract}

%\begin{abstract}
%We prove that if $H$ denotes a covariant Schrödinger operator with an unitary $b$-connection $\Phi$ and potential $V$ that acts on sections on a Hermitian vector bundle $F\to X$ over an arbitrary (possibly locally unbounded) weighted graph $(X,b,m)$, then for appropriate $\beta>0$ one has the relation
%$$
%\mathrm{tr}(\mathrm{e}^{-\beta \hbar H_{\Phi,V/\hbar}})\xrightarrow[]{\hbar\to 0+}\sum_{x\in X} \mathrm{tr}_{x}(\mathrm{e}^{-\beta V(x)}),
%$$
%under very mild assumptions on $V$ that guarantee
%$$
%\sum_{x\in X} \mathrm{tr}_{x}(\mathrm{e}^{- \beta V(x)})<\infty.
%$$
%In particular, these results apply to magnetic scalar Schrödinger operators. In the nonmagnetic scalar case, the result can be sharpened in the sense that the potential only has to be such that the underlying operators are well-defined. 
 %\end{abstract}

\setcounter{page}{1}

\date{\today} %
\maketitle

\section{Introduction}

Let us recall some classical facts from the Euclidean $\IR^l$: Assume that $-\Delta+v$ is a Schrödinger operator in $\mathsf{L}^2(\IR^l)$ with $v:\IR^l\to \IR$ (say) bounded from below and $\mathrm{e}^{-\beta v}\in\mathsf{L}^1(\IR^l)$ for some $\beta>0$. Then one has the following behaviour of the corresponding quantum partition function $\mathrm{tr}(\mathrm{e}^{-\beta(-\f{\hbar}{2}\Delta+v})$:
\begin{align}
(2\pi)^l\hbar^{\f{l}{2}}\mathrm{tr}\big(\mathrm{e}^{-\beta(-\f{\hbar}{2}\Delta+v)}\big)\xrightarrow[]{\hbar\to 0+} \int_{\IR^l}\int_{\IR^l}\mathrm{e}^{-\beta\big(\f{|p|^2}{2}+v(q)\big)}\Id p  \Id q.\label{sedq}
\end{align}
Clearly, (\ref{sedq}) is a semiclassical limit, as the integral on the right hand side is over the classical phase space of the system, and thus can be seen as the classical partition function of the system. The analogous result holds true with the same right hand side as in (\ref{sedq}), if one couples $-\Delta$ to a magnetic field. \\
If one thinks about realizing analogous results in a \emph{discrete configuration space}, one is apriori faced with the problem that it is not really clear what the underlying phase space should be. However, if one realizes that (\ref{sedq}) is equivalent to
\begin{align}
(2\pi\hbar)^{\f{l}{2}}\mathrm{tr}\left(\mathrm{e}^{-\beta\hbar\left(-\f{1}{2}\Delta+v/\hbar\right)}\right)\xrightarrow[]{\hbar\to 0+} \int_{\IR^l}\mathrm{e}^{-\beta v(x)}  \Id x,\label{sedq2}
\end{align}
it is clear that the latter problem only exists conceptually, not mathematically. A typical path integral proof \cite{Si05} of (\ref{sedq2}) relies on a Golden-Thompson type inequality of the form
\begin{align}
\label{gold}\mathrm{tr}\left(\mathrm{e}^{-t\left(-\f{1}{2}\Delta +v\right)}\right)&\leq\int_{\IR^l}\mathrm{e}^{ \f{t}{2} \Delta}(x,x)\mathrm{e}^{- t v(x)}  \Id x\>\text{ for all $t>0$,}\\
\nn &\text{which is of course }\>=(2\pi t)^{-\f{l}{2}}\int_{\IR^l}\mathrm{e}^{- t v(x)}  \Id x
\end{align}
and thus shows that the necessity of the regularizing factor $(2\pi\beta\hbar)^{\f{l}{2}}$ in (\ref{sedq2}) ultimately comes from the on-diagonal singularity at $t=0$ of $\mathrm{e}^{ \f{t}{2} \Delta}(\bullet,\bullet)$. \\
On the other hand, it has been shown recently \cite{GKS-13, gmt} that if $H$ is the (generally unbounded) operator corresponding to the canonical Dirichlet form on an arbitrary possibly locally infinite weighted graph\footnote{Here, $X$ is a countable set of vertices, $b:X\times X\to [0,\infty)$ an edge weight function, and $m:X\to (0,\infty)$ a vertex weight function; cf. section \ref{SS:setting} for the precise definitions.} $(X,b,m)$ and if $v:X\to\IR$ is a potential such that $H+v$ is well-defined as a form sum, then one still has a Golden-Thompson inequality 
\begin{align}
\mathrm{tr}\left(\mathrm{e}^{-t\left(H+v\right)}\right)\leq \sum_{x\in X}\mathrm{e}^{-t H}(x,x)\mathrm{e}^{- t v(x)}  m(x)\>\text{ for all $t>0$,}\label{gold2}
\end{align}
the fundamental difference of this discrete setting to the continuum setting is, however, that one has 
$$
\mathrm{e}^{-t H}(x,y)m(x)\leq 1,
$$ 
as $\mathrm{e}^{-t H}(x,y)m(x)$ is precisely the probability of finding the underlying Markoff particle in $y$ at the time $t$, when conditioned to start in $x$. In particular, the right hand side of (\ref{gold2}) is $\leq \sum\mathrm{e}^{- t v}$. At this point, at the latest, it becomes natural to ask whether the quantum partition function $\mathrm{tr}\left(\mathrm{e}^{-\beta\hbar\left(H+v/\hbar\right)}\right)$ satisfies
\begin{align}
\mathrm{tr}\left(\mathrm{e}^{-\beta\hbar\left(H+v/\hbar\right)}\right)\xrightarrow[]{\hbar\to 0+} \sum_{x\in X}\mathrm{e}^{-\beta v(x)}  .\label{sedq3}
\end{align}

One of our main results, \emph{Theorem \ref{main}, states that (\ref{sedq3}) indeed is always true, whenever $v$ is such that $H+v/\hbar$ is well-defined as a form sum for all $\hbar >0$}. Here, in contrast to the continuum setting, we even do not have to assume $\sum\mathrm{e}^{-\beta v}<\infty$, in the sense that the limit in (\ref{sedq3}) exists anyway. This admits entirely geometric results such as
\begin{align}
\mathrm{tr}\left(\mathrm{e}^{-\hbar ( H+\tilde{m}\hbar)}\right)\xrightarrow[]{\hbar\to 0+} \sum_{x\in X}m(x),\>\text{ with }\>\tilde{m}:=-\mathrm{ln}(m):X\longrightarrow \IR,
\end{align}
under very weak assumptions on $m$. \\
Finally, using results from \cite{gmt}, we extend Theorem \ref{main} to covariant Schrödinger operators that act on sections in Hermitian vector bundles over $(X,b,m)$ in Theorem \ref{main2}. These vector-valued operators have found applications in combinatorics, physics and image processing \cite{Si-Vu-12,  Kenyon-11,FKC-S-92}. In particular, Theorem \ref{main2} applies to magnetic Schrödinger operators which have been considered in \cite{Su,MT,Mi,Mi2,MY,DM,MSY}, and more recently in \cite{GKS-13}. This perfectly matches with the magnetic variant of (\ref{sedq}). Our proofs of Theorem \ref{main} and Theorem \ref{main2} are entirely probabilistic, and they rely on various (covariant) Feynman-Kac formulae that have been established recently in \cite{GKS-13,gmt}. \vspace{1.2mm}

This paper is organized as follows: Section \ref{S:main} is devoted to the formulation of the main results Theorem \ref{main} and Theorem \ref{main2}. To this end, we first explain the underlying Dirichlet space setting of weighted discrete graphs and Hermitian vector bundles thereon in Section \ref{SS:setting}. In Section \ref{ops}, we give precise functional analytic definitions of the underlying (covariant) Schrödinger operators, so that finally Theorem \ref{main} and Theorem \ref{main2} can be formulated in Section \ref{crims}. Then Section \ref{bew} is devoted to the proofs of these two theorems. Since, as we have already remarked, these proofs use partially technical probabilistic results from \cite{GKS-13,gmt}, we first completely develop the corresponding notation in Section \ref{bew} for the convenience of the reader. \vspace{4mm}

{\bf Acknowledgements.} The author (B.G.) would like to thank Matthias Keller for a very helpful discussion. B.G. has been financially supported by the SFB 647: Raum-Zeit-Materie.

\section{Main results}\label{S:main}

\subsection{Setting}\label{SS:setting}

Let $(X,b,m)$ be an arbitrary weighted graph, that is, $X$ is a countable set, $b$ is a symmetric function
$$
b: X\times X\longrightarrow  [0,\infty) 
$$
with the property that 
$$
b(x,x)=0,\>\>\sum_{y\in X} b(x,y)<\infty\>\>\text{ for all $x\in X$,}
$$
and $m:X\to (0,\infty)$ is an arbitrary function. Here, $X$ is interpreted as the set of vertices, $\{b>0\}$ is interpreted as the set of (weighted) edges of the graph $(X,b)$, and $m$ is considered as a vertex weight function (see also Remark \ref{lattice} below).\\
Given $x,y\in X$ we write $x\sim_b y$, if and only if $b(x,y)>0$, and $(X,b)$ is called connected, if for all $x,y\in X$ there is a finite chain $x_1,\dots, x_n \in X$ such that $x=x_1$, $y=x_n$ and $x_j\sim_b x_{j+1}$ for all $j=1,\dots, n-1$. As $m$ determines a measure on the discrete space $X$ in the obvious way, we get the corresponding  complex Hilbert space of complex-valued $m$-square-summable functions on $X$, which will be denoted by $\ell^2_m (X)$. We use the conventions
\begin{align*}
 \textrm{deg}_{m,b}(x):=\frac{1}{m(x)}\sum_{y\in X}b(x,y),\>\textrm{deg}_{1,b}(x):=\sum_{y\in X}b(x,y)\>\>\text{ for all $x\in X$,}
\end{align*}
and the spaces of complex-valued and complex-valued finitely supported functions on $X$ will be denoted by $\mathsf{C}(X)$ and $\mathsf{C}_c(X)$, respectively, where of course $\mathsf{C}_c(X)$ is dense in $\ell^2_m (X)$. \vspace{1mm}

\emph{We fix an arbitrary (w.l.o.g.) connected weighted graph $(X,b,m)$.}\vspace{1mm}

Next, we recall that a complex vector bundle $F\to X$ on $X$ with $\mathrm{rank}(F)= \nu\in\IN$ is given by a family $F=\bigsqcup_{x\in X}F_x$ of $\nu$-dimensional complex linear spaces, where then the corresponding space of sections is denoted by
$$
\Gamma(X,F)=\left.\big\{f\right| f:X\to F, f(x)\in F_x\big\}.
$$
If additionally each fiber $F_x$ comes equipped with a complex scalar product $(\bullet,\bullet)_x$, then $F\to X$ is referred to as a Hermitian vector bundle, and the norm and operator norm corresponding to $(\bullet,\bullet)_x$ will be denoted with $|\bullet|_x$. We get the corresponding complex Hilbert space of $\ell^2_m$-sections 
$$
\Gamma_{\ell^2_m}(X,F)=\left.\big\{f\right| f\in \Gamma(X,F), |f|\in \ell^2_m(X)\big\}
$$
with its canonical scalar product
$$
\left\langle f_1,f_2\right\rangle_m:=\sum_{x\in x} (f_1(x),f_2(x))_xm(x)
$$ 
and norm $\left\|f\right\|_m:=\sqrt{\left\langle f,f\right\rangle_m}$, where of course $\left\|\bullet\right\|_m$ will also denote the corresponding operator norm. Note that $\Gamma_{\ell^2_m}(X,F)$ generalizes $\ell^2_m(X)$ in a canonic way (cf. Remark \ref{magi} below), and that the space of finitely supported sections $\Gamma_{\mathsf{C}_c}(X,F)$ is dense in $\Gamma_{\ell^2_m}(X,F)$. For any $x\in X$ we write $\mathbf{1}_{x}:F_x\to F_x$ for the identity operator and $\mathrm{tr}_{x}:\mathrm{End}(F_x)\to \IC$ for the underlying canonical trace. In the above situation, a unitary $b$-connection $\Phi$ on $F\to X$ is an assignment which to any $x\sm y$ assigns an unitary map $\Phi_{x,y}: F_x\to F_y$ with $\Phi_{y,x}=\Phi^{-1}_{x,y}$.\vspace{1mm}

\emph{We fix a Hermitian vector bundle $F\to X$ of rank $\nu\in\IN$, with a unitary $b$-connection $\Phi$ defined on it.}

\subsection{Operators under consideration}\label{ops}

We start by defining a sesqui-linear form $\tilde{Q}_{\Phi,0}$ in $\Gamma_{\ell^2_m}(X,F)$ with domain of definition $\Gamma_{c}(X,F)$ by setting
\begin{align}
\tilde{Q}_{\Phi,0}(f_1,f_2)=&\>\frac{1}{2}\sum_{x\sm y}b(x,y)\big( f_1(x)-\Phi_{y,x} f_1(y),f_2(x)-\Phi_{y,x} f_2(y)\big)_{x}.\nn
\end{align}
The form $\tilde{Q}_{\Phi,0}$ is densely defined, symmetric and nonnegative. Furthermore, one has:

\begin{Lemma}\label{absh} $\tilde{Q}_{\Phi,0}$ is closable.
\end{Lemma}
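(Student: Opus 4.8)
The plan is to show closability by producing a closed form that extends $\tilde{Q}_{\Phi,0}$, or equivalently by verifying the standard closability criterion: if $(f_n)\subset\Gamma_c(X,F)$ is a Cauchy sequence for the form norm $\|f\|_m^2+\tilde{Q}_{\Phi,0}(f,f)$ with $f_n\to 0$ in $\Gamma_{\ell^2_m}(X,F)$, then $\tilde{Q}_{\Phi,0}(f_n,f_n)\to 0$. The key observation is that the difference operator $f\mapsto (x,y)\mapsto b(x,y)^{1/2}\big(f(x)-\Phi_{y,x}f(y)\big)$ maps into an $\ell^2$-space of ``functions on oriented edges'' and that, for each fixed oriented edge $(x,y)$, the map $f\mapsto f(x)-\Phi_{y,x}f(y)$ is continuous from $\Gamma_{\ell^2_m}(X,F)$ (with its Hilbert norm) into $F_x$: indeed $|f(x)-\Phi_{y,x}f(y)|_x\le |f(x)|_x+|f(y)|_y\le (m(x)^{-1/2}+m(y)^{-1/2})\|f\|_m$, using that $\Phi_{y,x}$ is an isometry and that pointwise evaluation is $\ell^2_m$-continuous.

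First I would fix the Hilbert space $\mathcal{E}$ of square-summable sections over oriented edges, namely the completion of finitely supported families $(g_{x,y})_{x\sm y}$, $g_{x,y}\in F_x$, under $\|g\|^2=\frac12\sum_{x\sm y}b(x,y)|g_{x,y}|_x^2$, and define $D:\Gamma_c(X,F)\to\mathcal{E}$ by $(Df)_{x,y}=f(x)-\Phi_{y,x}f(y)$, so that $\tilde{Q}_{\Phi,0}(f,f)=\|Df\|_{\mathcal{E}}^2$. Then I would take a form-Cauchy sequence $f_n\to 0$ in $\Gamma_{\ell^2_m}(X,F)$; form-Cauchyness gives that $(Df_n)$ is Cauchy in $\mathcal{E}$, hence converges to some $g\in\mathcal{E}$. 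The point is to identify $g=0$. For this I use that convergence in $\mathcal{E}$ implies, for every fixed oriented edge $(x,y)$, convergence $b(x,y)^{1/2}(Df_n)_{x,y}\to b(x,y)^{1/2}g_{x,y}$ in $F_x$ (the edge-evaluation is $\mathcal{E}$-continuous), while on the other hand the edge-evaluation $f\mapsto b(x,y)^{1/2}(f(x)-\Phi_{y,x}f(y))$ is continuous on $\Gamma_{\ell^2_m}(X,F)$ by the estimate above, so it sends $f_n\to 0$ to $0$. Uniqueness of limits in $F_x$ forces $g_{x,y}=0$ for every oriented edge, hence $g=0$, hence $\tilde{Q}_{\Phi,0}(f_n,f_n)=\|Df_n\|_{\mathcal{E}}^2\to\|g\|_{\mathcal{E}}^2=0$, which is exactly closability.

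I do not anticipate a serious obstacle here; the only mildly delicate point is making sure the ``space of sections over oriented edges'' $\mathcal{E}$ and the map $D$ are set up so that the two continuity statements (evaluation continuous on $\mathcal{E}$, and evaluation continuous on $\Gamma_{\ell^2_m}(X,F)$) both hold simultaneously, and that one does not accidentally need summability of $\deg_{m,b}$ or $b$ — which one does not, because everything is tested one edge at a time. An alternative, essentially equivalent route would be to exhibit $\tilde{Q}_{\Phi,0}$ as the restriction to $\Gamma_c(X,F)$ of the quadratic form of the operator $f\mapsto \frac1{m(x)}\sum_y b(x,y)(f(x)-\Phi_{y,x}f(y))$ defined on its maximal domain, or to twist the known closability of the scalar canonical Dirichlet form by the unitary connection $\Phi$; but the direct $D$-into-$\mathcal{E}$ argument is the cleanest and is what I would write up.
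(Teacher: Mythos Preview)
Your argument is correct. The paper takes a slightly different (and shorter) route: rather than introducing the edge Hilbert space $\mathcal{E}$ and verifying the closability criterion for the operator $D$, it simply observes that the map
\[
\Gamma_{\ell^2_m}(X,F)\longrightarrow[0,\infty],\qquad f\longmapsto\tfrac{1}{2}\sum_{x\sim_b y}b(x,y)\big|f(x)-\Phi_{y,x}f(y)\big|_x^2
\]
is lower semicontinuous, which follows from Fatou's lemma once one knows that $\|f_n-f\|_m\to 0$ implies $|f_n(z)-f(z)|_z\to 0$ for every $z\in X$ --- exactly the discrete-space fact you also exploit. A nonnegative quadratic form that extends to a lower semicontinuous $[0,\infty]$-valued functional on the whole Hilbert space is closed on its finiteness domain, so $\tilde{Q}_{\Phi,0}$ admits a closed extension and is therefore closable. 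Both arguments thus rest on the same pointwise-evaluation continuity; the paper's Fatou packaging is a two-line proof and avoids constructing $\mathcal{E}$, while your operator-theoretic version has the mild advantage of making the closure concrete as the graph closure of $D$ into $\mathcal{E}$.
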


\begin{proof} It is sufficient to show that the map
\begin{align*}
\Gamma_{\ell^2_m}(X,E)\longrightarrow [0,\infty],\>f\longmapsto \frac{1}{2}\sum_{x\sm y}b(x,y)\big| f(x)-\Phi_{y,x} f(y)\big|^2_{x}
\end{align*}
is lower semicontinuous. However, from the discreteness of the underlying measure space we get the implication
$$
\text{$\left\|f_n-f\right\|_m\xrightarrow[]{n\to \infty}0\>\Rightarrow \>|f_n(z)- f(z)|_z\xrightarrow[]{n\to \infty} 0$ for all $z\in X$,} 
$$
so that the claim follows from Fatou\rq{}s lemma.
\end{proof}

We denote the closure of $\tilde{Q}_{\Phi,0}$ by $Q_{\Phi,0}$ and the self-adjoint operator corresponding to the latter form by $H_{\Phi,0}$. 

\begin{Remark}\label{magi} Of course, we can deal with usual scalar functions on $X$ simply by taking $F_x=\{x\}\times\IC$ with its canonical Hermitian structure. Indeed, then the sections in $F\to X$ can be canonically identified with complex-valued functions on $X$ any $\Phi$ can uniquely be written as $\Phi(x,y)=\mathrm{e}^{\mathrm{i}\theta(x,y)}$ with $\theta$ is a magnetic potential on $(X,b)$, that is, an antisymmetric function 
$$
\theta: \{b>0\}\longrightarrow [-\pi,\pi],
$$
and as a particular case of the above constructions, we get the corresponding forms $\tilde{Q}^{\mathrm{scal}}_{\theta,0}$ and $Q^{\mathrm{scal}}_{\theta,0}$ in $\ell^2_m(X)$. The operator corresponding to $Q^{\mathrm{scal}}_{\theta,0}$ will be denoted with $H^{\mathrm{scal}}_{\theta,0}$. The particular form $Q:=Q^{\mathrm{scal}}_{0,0}$ is a regular Dirichlet form $\ell^2_m(X)$, when $X$ is equipped with its discrete topology. Let  $H:=H^{\mathrm{scal}}_{0,0}$ denote the corresponding operator, and let
\begin{align*}
[0,\infty)\times X\times X\longrightarrow (0,\infty),\>(t,x,y)\longmapsto p(t,x,y)
%\f{1}{m(x)m(y)}\left\langle \mathrm{e}^{-tH}\delta_x,\delta_y\right\rangle_m
\end{align*}
be the integral kernel corresponding to 
$$
(\mathrm{e}^{-t H})_{t\geq 0}\subset \ILL(\ell^2_m(X)), 
$$
where $p(\bullet,\bullet,\bullet)>0$ is implied by the connectedness of $(X,b)$. 
\end{Remark}

Let us make the essential point of Lemma \ref{absh} clear: $\tilde{Q}_{\Phi,0}$ is closable although, in general, this form need not come from a well-defined symmetric operator $\Gamma_{\ell^2_m}(X,F)$ with domain $\Gamma_{c}(X,F)$. For a precise statement in this context, set
\begin{align*}
 \Gamma(X,F;b):=\left\{f\left|f\in\Gamma(X,F), \sum_{y\in X}b(x,y)|f(y)|_y<\infty\>\text{ for all $x\in X$}\right\}\right.
\end{align*}
and define a covariant formal difference operator $\tilde{H}_{\Phi,0}$ by
\begin{align*}
&\tilde{H}_{\Phi,0}: \Gamma(X,F;b)\longrightarrow \Gamma(X,F)\\
&\tilde{H}_{\Phi,0}f(x)=\>\frac{1}{m(x)}\sum_{\{y| \> y\sim_b x\}}b(x,y)\big( f(x)-\Phi_{y,x} f(y)\big) ,\>\>x\in X,
\end{align*}
with the obvious notation $\tilde{H}^{\mathrm{scal}}_{\theta,0}$, $\tilde{H}:=\tilde{H}^{\mathrm{scal}}_{0,0}$ in the scalar situation of Remark \ref{magi}.

\begin{Lemma}\label{unn} \emph{a)} If it holds that
\begin{align}
 \sum_{x\in X}\f{b(x,y)^2}{m(x)}<\infty\>\text{  for all $y\in X$,} \label{eins}
\end{align}
then one has
 \begin{align}
&\tilde{H}_{\Phi,0}[\Gamma_c(X,F)]\subset \Gamma_{\ell^2_m}(X,F),\> \Gamma_{\ell^2_m}(X,F) \subset \Gamma(X,F;b)\label{qwq}\\
&\tilde{Q}_{\Phi,0}(f_1,f_2)=\langle \tilde{H}_{\Phi,0} f_1,f_2\rangle_m\>\text{ for all $f_1,f_2\in \Gamma_c(X,F)$.}\label{quas}
\end{align}
\emph{b)} If $(X,b)$ is locally finite, in the sense that for all $y\in X$ there are at most only finitely many $x\in X$ with $b(x,y)>0$, then one has (\ref{eins}) and 
\begin{align}
\Gamma(X,F)=\Gamma(X,F;b).\label{gleich}
\end{align}
\emph{c)} Set
$$
C(b,m):= \sup_{x\in X}\mathrm{deg}_{m,b}(x)=\sup_{x\in X}\left(\frac{1}{m(x)}\sum_{y\in X}b(x,y)\right).
$$
If $C(b,m)<\infty$, then one has (\ref{eins}), and $\tilde{H}_{\Phi,0}$ with domain of definition $\Gamma_c(X,F)$ is bounded with $\left\|\tilde{H}_{\Phi,0}\right\|_m\leq 2 C(b,m)$ and $\overline{\tilde{H}_{\Phi,0}}=H_{\Phi,0}$. If $C(b,m)=\infty$, then one either has
$$
\tilde{H}_{\Phi,0}[\Gamma_c(X,F)]\not\subset \Gamma_{\ell^2_m}(X,F),
$$
or $\tilde{H}_{\Phi,0}$ with domain of definition $\Gamma_c(X,F)$ is unbounded. 
\end{Lemma}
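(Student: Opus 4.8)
The plan is to treat the three parts in order; all of it runs on Cauchy--Schwarz, symmetrization of edge sums, and one standard fact about the operator representing a bounded form. Throughout I use the standing assumption $\sum_{y}b(x,y)<\infty$ and the symmetry $b(x,y)=b(y,x)$ without further comment.

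\emph{Part a).} For $\tilde H_{\Phi,0}[\Gamma_c(X,F)]\subset\Gamma_{\ell^2_m}(X,F)$, fix $f\in\Gamma_c(X,F)$ with finite support $S$. For $x\notin S$ only neighbours lying in $S$ contribute, so $|\tilde H_{\Phi,0}f(x)|_x\le m(x)^{-1}\sum_{y\in S}b(x,y)|f(y)|_y$, and Cauchy--Schwarz over the finite set $S$ gives $|\tilde H_{\Phi,0}f(x)|_x^2\,m(x)\le\#S\cdot\sum_{y\in S}\frac{b(x,y)^2}{m(x)}|f(y)|_y^2$; summing over $x\notin S$ and invoking \eqref{eins} yields finiteness, while the sum over $x\in S$ is a finite sum of finite terms. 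For $\Gamma_{\ell^2_m}(X,F)\subset\Gamma(X,F;b)$, given $f\in\Gamma_{\ell^2_m}(X,F)$ and $x\in X$ write $\sum_y b(x,y)|f(y)|_y=\sum_y\frac{b(x,y)}{\sqrt{m(y)}}\cdot\sqrt{m(y)}|f(y)|_y$ and apply Cauchy--Schwarz; the first factor is finite by \eqref{eins} and the symmetry of $b$, the second is $\|f\|_m$. Finally, \eqref{quas} is a summation by parts: expand $\tilde Q_{\Phi,0}(f_1,f_2)$ into its four inner-product terms, relabel $x\leftrightarrow y$ in two of them, and collapse the connection terms using unitarity of $\Phi_{y,x}\colon F_x\to F_y$ in the forms $(\Phi_{y,x}u,v)_x=(u,\Phi_{y,x}^{-1}v)_y=(u,\Phi_{x,y}v)_y$ and $(\Phi_{y,x}u,\Phi_{y,x}w)_x=(u,w)_y$; absolute convergence of all sums holds because $f_2$ is finitely supported and $\sum_y b(x,y)<\infty$.

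\emph{Part b).} Local finiteness makes $\{x:b(x,y)>0\}$ finite for each $y$, so the sum in \eqref{eins} is a finite sum, hence finite; the same observation shows $\sum_y b(x,y)|f(y)|_y<\infty$ for every $f\in\Gamma(X,F)$, which is \eqref{gleich}.

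\emph{Part c).} If $C(b,m)<\infty$, then \eqref{eins} follows from $\frac{b(x,y)^2}{m(x)}\le b(x,y)\,\mathrm{deg}_{m,b}(x)\le C(b,m)\,b(x,y)$ together with $\sum_x b(x,y)<\infty$. For the norm bound, the cleanest route is through the form: for $f\in\Gamma_c(X,F)$ the estimate $|u-v|^2\le 2|u|^2+2|v|^2$ and unitarity of $\Phi$ give $Q_{\Phi,0}(f,f)\le\sum_{x\sm y}b(x,y)(|f(x)|_x^2+|f(y)|_y^2)=2\sum_x|f(x)|_x^2\,\mathrm{deg}_{m,b}(x)\,m(x)\le 2C(b,m)\|f\|_m^2$, so $H_{\Phi,0}$ is bounded with $\|H_{\Phi,0}\|_m\le 2C(b,m)$; a parallel Cauchy--Schwarz estimate applied directly to $m(x)\,\tilde H_{\Phi,0}f(x)=\sum_y b(x,y)(f(x)-\Phi_{y,x}f(y))$ gives $\|\tilde H_{\Phi,0}f\|_m\le 2C(b,m)\|f\|_m$. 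Being bounded on the dense subspace $\Gamma_c(X,F)$, $\tilde H_{\Phi,0}$ has a bounded closure $\overline{\tilde H_{\Phi,0}}$; it is symmetric by \eqref{quas}, hence self-adjoint, and passing to the limit in \eqref{quas} yields $\langle\overline{\tilde H_{\Phi,0}}f_1,f_2\rangle_m=Q_{\Phi,0}(f_1,f_2)$ for all $f_1,f_2$, so by uniqueness of the bounded self-adjoint operator representing a bounded form, $\overline{\tilde H_{\Phi,0}}=H_{\Phi,0}$. If instead $C(b,m)=\infty$, assume $\tilde H_{\Phi,0}[\Gamma_c(X,F)]\subset\Gamma_{\ell^2_m}(X,F)$; fix $x_0\in X$, pick a unit vector $\xi\in F_{x_0}$ (possible as $\nu\ge 1$), and test with the section $f$ equal to $\xi$ at $x_0$ and zero elsewhere, so $\|f\|_m^2=m(x_0)$. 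A direct computation gives $\tilde H_{\Phi,0}f(x_0)=\mathrm{deg}_{m,b}(x_0)\,\xi$ and $\tilde H_{\Phi,0}f(y)=-\frac{b(y,x_0)}{m(y)}\Phi_{x_0,y}\xi$ for $y\ne x_0$, whence $\|\tilde H_{\Phi,0}f\|_m^2=\mathrm{deg}_{m,b}(x_0)^2\,m(x_0)+\sum_y\frac{b(y,x_0)^2}{m(y)}\ge\mathrm{deg}_{m,b}(x_0)^2\,\|f\|_m^2$; taking the supremum over $x_0$ forces $\tilde H_{\Phi,0}$ to be unbounded on $\Gamma_c(X,F)$.

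I do not expect a genuine obstacle, as the content is soft; the step requiring the most care is the summation-by-parts identity \eqref{quas} (and its limiting form in part c)), where one must track which fibre each inner product lives in and use the unitarity relations $\Phi_{y,x}^{*}=\Phi_{y,x}^{-1}=\Phi_{x,y}$ correctly when relabelling edges. Everything else reduces to Cauchy--Schwarz and the finiteness of vertex degrees.
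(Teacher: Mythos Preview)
Your proof is correct and, for parts a), b), and the first half of c), tracks the paper's argument closely: Cauchy--Schwarz for the two inclusions in \eqref{qwq}, a direct summation-by-parts for \eqref{quas} (the paper instead cites Green's formula from \cite{MilTu}), and the form bound $\tilde Q_{\Phi,0}(f)\le 2C(b,m)\|f\|_m^2$ for boundedness. One cosmetic slip: you write ``$\Phi_{y,x}\colon F_x\to F_y$'', but the paper's convention is $\Phi_{y,x}\colon F_y\to F_x$; your subsequent identities are in fact consistent with the latter.

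The genuine divergence is in the unbounded case of c). The paper invokes an external result (Theorem~8.1 in \cite{kellenz}) to conclude that the \emph{scalar} form $\tilde Q$ is unbounded when $C(b,m)=\infty$, and then transfers this to $\tilde H_{\Phi,0}$ via the diamagnetic-type inequality $\tilde Q_{\Phi,0}(f)\ge\tilde Q(|f|)$. You instead test directly with delta sections $f=\xi\,\mathbf 1_{\{x_0\}}$ and compute $\|\tilde H_{\Phi,0}f\|_m^2\ge\mathrm{deg}_{m,b}(x_0)^2\,\|f\|_m^2$, which immediately forces unboundedness since $\sup_{x_0}\mathrm{deg}_{m,b}(x_0)=\infty$. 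Your route is more elementary and entirely self-contained---it avoids both the external citation and the Kato-type comparison with the scalar form---at the cost of not exhibiting the slightly stronger statement that already the \emph{quadratic form} $\tilde Q_{\Phi,0}$ is unbounded. Either approach suffices for the lemma as stated.
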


\begin{Remark}\label{lattice} 1. Lemma \ref{unn} a) is optimal in the following sense: If (\ref{eins}) is not satisfied, then one has $\tilde{H}[\mathsf{C}_c(X)]\not\subset \ell^2_m(X)$ by Theorem 3.3 in \cite{Keller-Lenz-09}. \\
2. Note that in the scalar situation one has
\begin{align*}
&\tilde{H}^{\mathrm{scal}}_{\theta,0}:\mathsf{C}(X;b)\longrightarrow\mathsf{C}(X)\\
&\tilde{H}^{\mathrm{scal}}_{\theta,0}f(x)=\>\frac{1}{m(x)}\sum_{ \{y|\>y\sim_b x\}}b(x,y)\big( f(x)-\mathrm{e}^{\mathrm{i}\theta(y,x)} f(y)\big),\>\>x\in X.
\end{align*}
3. More specifically, upon taking $X=\IZ^l$, 
$$
b(x,y):=\begin{cases}1, \text{ if $|y-x|_{\IR^l}=1$,}\\0,\text{ else}\end{cases}
$$
$m\equiv 1$, we get the magnetic lattice Laplacian 
\begin{align}
\tilde{H}^{\mathrm{scal}}_{\theta,0} f(x)= \sum_{ \left\{y\left|\>|y-x|_{\IR^l}=1\right\}\right.  }\big( f(x)-\mathrm{e}^{\mathrm{i}\theta(y,x)} f(y)\big),\>\>x\in \IZ^l,\label{mag}
\end{align}
which is bounded in the sense of Lemma \ref{unn} c), with $\overline{\tilde{H}^{\mathrm{scal}}_{\theta,0}}=H^{\mathrm{scal}}_{\theta,0}$. Let us point out two recent papers in this context that deal with the spectral theory of operators that generalize (\ref{mag}) for $\theta=0$ into two directions: Firstly, in \cite{jo} the authors consider $\delta$-perturbations of $\Psi\left(H\right )$ with $\Psi\in\mathsf{C}^1(0,\infty)$ such that $\Psi\rq{}>0$, and in \cite{yev} the authors replace $\IZ^l$ by an arbitrary periodic lattice and perturb the resulting operator with periodic potentials. 
%Furthermore, if $x\mapsto \mathrm{deg}_{m,b}(x) $ is unbounded, then there is no bounded operator in $\ell^2_m(X)$ which coincides with $\tilde{H}^{\mathrm{scal}}_{0,0}$ on $\ell^2_m(X)\cap \mathsf{C}(X;b)$ (cf. Theorem 8.1 in \cite{kellenz}).
\end{Remark}

\begin{proof}[Proof of Lemma \ref{unn}.] a) By Cauchy-Schwarz we get that (\ref{eins}) implies the second inclusion in (\ref{qwq}), and Green\rq{}s formula (cf. Lemma 3.1 in \cite{MilTu}) shows that the first inclusion in (\ref{qwq}) implies (\ref{quas}). It remains to prove that (\ref{eins}) implies the first inclusion in (\ref{qwq}). To this end, let $f\in \Gamma_c(X,F)$. By writing $f$ as a finite sum $f=\sum_z 1_{\{z\}} f(z)$, we can assume $f=0$ away from some $z\in X$. Then we have
\begin{align*}
&\left\|\tilde{H}_{\Phi,0} f\right\|^2_m=\sum_{x\in X}\f{1}{m(x)}\left|\sum_{\{y| \> y\sim_b x\}}b(x,y)f(x)-\sum_{\{y| \> y\sim_b x\}}b(x,y)\Phi_{y,x}f(y)\right|^2_x\nn\\
&\leq  \sum_{x\in X}\f{1}{m(x)}\left(\sum_{ y\in X}b(x,y)\left|f(x)\right|_x\right)^2 +\sum_{x\in X}\f{1}{m(x)}\left(\sum_{y\in X}b(x,y)\left|f(y)\right|_y\right)^2\nn\\
&\leq \f{|f(z)|^2_{z}}{m(z)}\left(\sum_{ y\in X}b(y,z)\right)^2+|f(z)|^2_z\sum_{x\in X}\f{b(x,z)^2}{m(x)}<\infty.
\end{align*}
b) This is obvious.\\
c) Let $f\in \Gamma_{c}(X,F)$ be arbitrary, and assume first that $C(b,m)<\infty$. Then we have
$$
 \sum_{x\in X}\f{b(x,y)^2}{m(x)}\leq C(b,m) \sum_{x\in X} b(x,y)  <\infty,
$$
in particular, $\tilde{H}_{\Phi,0}$ is a well-defined linear operator in $\Gamma_{\ell^2_m}(X,F)$ with domain of definition $\Gamma_{c}(X,F)$, and  
\begin{align*}
&\langle \tilde{H}_{\Phi,0} f,f\rangle_m=\tilde{Q}_{\Phi,0}(f) \leq 2\sum_{x,y\in X} b(x,y)|f(x)|^2_x \leq 2 C(b,m) \left\|f\right\|_m^2
\end{align*}
which entails that $\tilde{H}_{\Phi,0}$ is bounded with operator norm $\leq 2C(b,m)$ (see also \cite{davies} for the particular case $\tilde{H}_{\Phi,0}=\tilde{H}$). The assertion $\overline{\tilde{H}_{\Phi,0}}=H_{\Phi,0}$ follows from $\langle \tilde{H}_{\Phi,0} f,f\rangle_m=\tilde{Q}_{\Phi,0}(f)$. \\
Assume now that $C(b,m)=\infty$, and that
$$
\tilde{H}_{\Phi,0}[\Gamma_c(X,F)]\subset \Gamma_{\ell^2_m}(X,F).
$$
Then $\tilde{Q}$ is unbounded from above by Theorem 8.1 in \cite{kellenz}. But in view of
\begin{align*}
 \langle \tilde{H}_{\Phi,0} f,f\rangle_m=\tilde{Q}_{\Phi,0}(f)\geq  \tilde{Q}(|f|),
\end{align*}
where we have used Green's formula and 
$$
\big|f(x)-\Phi_{y,x}f (y)\big|^2_y\geq \big||f(x)|_x-|f (y)|_y\big|^2,
$$
we find that $\tilde{H}_{\Phi,0}$ is unbounded as well.
\end{proof}

Let us now take perturbations by potentials into account. Given a potential $V$ on $F\to X$, that is, $V\in\Gamma(X,\mathrm{End}(F))$ is pointwise self-adjoint, we can define a symmetric sesqui-linear form $Q_V$ in $\Gamma_{\ell^2_m}(X,F)$ by
$$
Q_V(f_1,f_2)=\sum_{x\in X} \big( V(x)f_1(x),f_2(x)\big)_{x}m(x)
$$
on its maximal domain of definition. Whenever $V$ as above admits a decomposition $V=V^{+}-V^{-}$ into potentials $V^{\pm}\geq 0$ (fiberwise as self-adjoint operators) such that $Q_{V^{-}}$ is $Q_{\Phi,0}$-bounded with bound $<1$, that is, if there are $0<C_1<1$, $C_2>0$ such that
$$
Q_{V^-}(f)\leq C_1 Q_{\Phi,0}(f)+C_2 \left\|f\right\|_m^2\text{ for all $f\in\mathsf{D}(Q_{\Phi,0})$,}
$$
then the sesqui-linear form $Q_{\Phi,V}:=Q_{\Phi,0}+Q_{V}$ is densely defined, symmetric, closed and semi-bounded (from below). The self-adjoint semi-bounded operator corresponding to $Q_{\Phi,V}$ will be denoted with $H_{\Phi,V}$. 

\begin{Remark} Note that in the scalar situation of Remark \ref{magi}, any potential can be identified with a function $v:X\to \IR$, and as a particular of the above construction we get the quadratic form $Q^{\mathrm{scal}}_{\theta,v}$ in $\ell^2_m(X)$. The corresponding operator will be written as $H^{\mathrm{scal}}_{\theta,v}$.
\end{Remark}

One has:

\begin{Propandef} \emph{a)} If a potential $V$ on $F\to X$ admits a decomposition $V=V^{+}-V^{-}$ into potentials $V^{\pm}\geq 0$ such that $Q_{|V^{-}|}$ is $Q$-bounded with bound $<1$, then $Q_{V^{-}}$ is also $Q_{\Phi,0}$-bounded with bound $<1$, in particular, $H_{\Phi,V}$ is well-defined. In this case, we say that \emph{$V$ is $Q$-decomposable}, and that $V=V^{+}-V^{-}$ \emph{is a $Q$-decomposition of $V$}.\\
\emph{b)} If a potential $V$ on $F\to X$ admits a decomposition $V=V^{+}-V^{-}$ into potentials $V^{\pm}\geq 0$ such that $Q_{|V^{-}|}$ is infinitesimally $Q$-bounded, that is, if for any $\varepsilon>0$ there is a $C(\varepsilon)>0$ such that
$$
Q_{|V^-|}(f)\leq \varepsilon Q(f)+C(\varepsilon) \left\|f\right\|_m^2\text{ for all $f\in\mathsf{D}(Q)$,}
$$
then $Q_{V^{-}}$ is also infinitesimally $Q_{\Phi,0}$-bounded, in particular, $H_{\Phi,V}$ is well-defined. In this case, we say that \emph{$V$ is infinitesimally $Q$-decomposable}, and that $V=V^{+}-V^{-}$ \emph{is an infinitesimal $Q$-decomposition of $V$}.
\end{Propandef}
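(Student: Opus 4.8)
The plan is to prove both parts simultaneously by reducing the vector-bundle form bound to the given scalar form bound via two elementary pointwise inequalities, and then propagating the estimate from the form core $\Gamma_c(X,F)$ to all of $\mathsf{D}(Q_{\Phi,0})$ by an approximation argument. Throughout, for $f\in\Gamma(X,F)$ I write $|f|\in\mathsf{C}(X)$ for the scalar function $|f|(x)=|f(x)|_x$, noting that $\||f|\|_m=\|f\|_m$ and that $|f|\in\mathsf{C}_c(X)\subset\mathsf{D}(Q)$ whenever $f\in\Gamma_c(X,F)$.

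First I would fix $f\in\Gamma_c(X,F)$ and establish the two inequalities $Q_{V^{-}}(f)\leq Q_{|V^{-}|}(|f|)$ and $Q(|f|)\leq Q_{\Phi,0}(f)$. The first holds because each $V^{-}(x)\geq 0$ is self-adjoint on $F_x$, so $(V^{-}(x)f(x),f(x))_x\leq |V^{-}(x)|_x\,|f(x)|^2_x$, and one sums against $m$. The second is the discrete diamagnetic inequality already used in the proof of Lemma \ref{unn}c): unitarity of $\Phi_{y,x}$ gives $|\Phi_{y,x}f(y)|_x=|f(y)|_y$, and the reverse triangle inequality yields $|f(x)-\Phi_{y,x}f(y)|^2_x\geq\big||f(x)|_x-|f(y)|_y\big|^2$, so that $Q_{\Phi,0}(f)=\tilde{Q}_{\Phi,0}(f)\geq Q(|f|)$. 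Feeding $|f|$ into the scalar hypothesis — with $0<C_1<1$, $C_2>0$ in case a), and with an arbitrary $\varepsilon>0$ and the associated $C(\varepsilon)$ in case b) — I obtain, for all $f\in\Gamma_c(X,F)$,
$$
Q_{V^{-}}(f)\leq C_1 Q_{\Phi,0}(f)+C_2\|f\|^2_m\qquad\text{resp.}\qquad Q_{V^{-}}(f)\leq\varepsilon Q_{\Phi,0}(f)+C(\varepsilon)\|f\|^2_m.
$$

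It then remains to pass from $\Gamma_c(X,F)$ to $\mathsf{D}(Q_{\Phi,0})$. Since $Q_{\Phi,0}$ is by definition the closure of $\tilde{Q}_{\Phi,0}$, the space $\Gamma_c(X,F)$ is a form core, so for $f\in\mathsf{D}(Q_{\Phi,0})$ I can choose $f_n\in\Gamma_c(X,F)$ with $\|f_n-f\|_m\to 0$ and $Q_{\Phi,0}(f_n-f)\to 0$; then $Q_{\Phi,0}(f_n)\to Q_{\Phi,0}(f)$ and $\|f_n\|_m\to\|f\|_m$, and, as in Lemma \ref{absh}, discreteness of the measure forces $|f_n(x)-f(x)|_x\to 0$ for each $x$. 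The right-hand sides of the two inequalities thus converge to the right-hand sides for $f$; the delicate point is the left-hand side, where $Q_{V^{-}}$ is a priori only a nonnegative form on its maximal domain and need not be closed, so form convergence is not available. Instead, each summand $(V^{-}(x)f_n(x),f_n(x))_x m(x)$ is nonnegative and converges pointwise to $(V^{-}(x)f(x),f(x))_x m(x)$, so Fatou's lemma gives $Q_{V^{-}}(f)\leq\liminf_n Q_{V^{-}}(f_n)$, and the inequalities survive the limit.

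Consequently, in case a) $Q_{V^{-}}$ is $Q_{\Phi,0}$-bounded with bound $C_1<1$, and in case b) it is infinitesimally $Q_{\Phi,0}$-bounded (hence, taking $\varepsilon=1/2$, also $Q_{\Phi,0}$-bounded with bound $<1$). In either case the construction recalled just before the statement shows that $Q_{\Phi,V}=Q_{\Phi,0}+Q_V$ is densely defined, symmetric, closed and semibounded, so that $H_{\Phi,V}$ is well-defined, completing both parts. I expect the Fatou argument for the left-hand limit to be the only genuine obstacle; everything else is a direct chaining of the two pointwise estimates with the scalar form bounds.
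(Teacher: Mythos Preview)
Your proof is correct and rests on the same key fact as the paper's: the diamagnetic inequality $Q(|f|)\leq Q_{\Phi,0}(f)$, combined with the trivial pointwise bound $Q_{V^-}(f)\leq Q_{|V^-|}(|f|)$. The paper's proof is a single sentence that invokes Theorem~2~(ii) of \cite{gmt}, which gives the diamagnetic inequality directly on the full form domain: for every $f\in\mathsf{D}(Q_{\Phi,0})$ one has $|f|\in\mathsf{D}(Q)$ with $Q_{\Phi,0}(f)\geq Q(|f|)$; the form bound for $Q_{V^-}$ then follows at once. You instead prove the inequality on the core $\Gamma_c(X,F)$ and extend to $\mathsf{D}(Q_{\Phi,0})$ by approximation, applying Fatou to $Q_{V^-}$ rather than first extending the diamagnetic inequality itself. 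This is a minor but genuine difference in packaging: your argument is self-contained and makes the limiting step explicit, while the paper outsources that step to the cited reference. Both reach the same conclusion by the same mechanism.
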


\begin{proof} Both statements follow from the fact that for any $f\in\mathsf{D}(Q_{\Phi,0})$ one has  $|f|\in\mathsf{D}(Q)$ with $Q_{\Phi,0}(f)\geq Q(|f|)$ (cf. Theorem 2 (ii) in \cite{gmt}).
\end{proof}

\begin{Remark}\label{fdg} 1. If a potential $V$ on $F\to X$ admits a decomposition $V=V^{+}-V^{-}$ into potentials $V^{\pm}\geq 0$ such that $|V^{-}|\in \mathcal{K}(Q)$, the Kato class of $Q$ (for example if $V\geq -C$ for some constant $C>0$), then $V=V^{+}-V^{-}$ is an infinitesimal $Q$-decomposition of $V$ (cf. Theorem 3.1 in \cite{peter}). Recall here that $w:X\to \IC$ is in $\mathcal{K}(Q)$, if and only if
$$
\lim_{t\to 0+}\sup_{x\in X}\int^t_0\sum_{y\in X} p(s,x,y) |w(y)| m(y)\Id s =0.
$$

2. If $V=V^{+}-V^{-}$ is an infinitesimal $Q$-decomposition of a potential $V$ on $F\to X$, then $V/\hbar=V^{+}/\hbar-V^{-}/\hbar$ is an infinitesimal $Q$-decomposition of $V/\hbar$ for any $\hbar>0$, in particular, $H_{\Phi,V/\hbar}$ is well-defined.
\end{Remark}

We refer the reader to \cite{MilTu} for problems concerning the explicit domain of definition of $H_{\Phi,V}$, and essential-selfadjointness questions related with $H_{\Phi,V}$.

\subsection{The semiclassical limit of the quantum partition function}\label{crims}

We can now formulate the main results of this paper. Firstly, we consider the scalar nonmagnetic situation:

\begin{Theorem}\label{main} Assume that $w:X\to\IR$ is infinitesimally $Q$-decomposable. Then for all $\beta>0$ one has
\begin{align}
&\mathrm{tr}\big(\mathrm{e}^{-\beta\hbar  H^{\mathrm{scal}}_{0,w/\hbar}}\big)\leq \sum_{x\in X}\mathrm{e}^{-\beta w(x)}\in [0,\infty]\>\>\text{ for all $\hbar >0$},\label{lim0}\\
&\liminf_{\hbar \to 0+}\mathrm{tr}\big(\mathrm{e}^{-\beta\hbar H^{\mathrm{scal}}_{0,w/\hbar}}\big) \geq \sum_{x\in X}\mathrm{e}^{-\beta w(x)}, \label{lim1}
\end{align}
in particular, 
\begin{align}
\mathrm{tr}\big(\mathrm{e}^{-\beta\hbar H^{\mathrm{scal}}_{0,w/\hbar}}\big) \xrightarrow[]{\hbar\to 0+} \sum_{x\in X}\mathrm{e}^{-\beta w(x)}. \label{lim}
\end{align}
\end{Theorem}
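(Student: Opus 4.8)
The plan is to prove the two inequalities \eqref{lim0} and \eqref{lim1} separately, since together they force \eqref{lim}. The upper bound \eqref{lim0} should be essentially the Golden--Thompson inequality \eqref{gold2} applied to the potential $w/\hbar$ at time $t=\beta\hbar$: writing the diagonal of the heat kernel as $p(\beta\hbar,x,x)m(x)$ and using the Markov property bound $p(t,x,x)m(x)\le 1$ recalled in the introduction, one gets
\begin{align*}
\mathrm{tr}\big(\mathrm{e}^{-\beta\hbar H^{\mathrm{scal}}_{0,w/\hbar}}\big)\le \sum_{x\in X}p(\beta\hbar,x,x)\,\mathrm{e}^{-\beta w(x)}\,m(x)\le \sum_{x\in X}\mathrm{e}^{-\beta w(x)},
\end{align*}
which is \eqref{lim0}. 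The only thing to check carefully is that \eqref{gold2}, which is cited for a potential $v$ with $H+v$ well-defined, applies with $v=w/\hbar$; this is exactly Remark \ref{fdg}.2, which guarantees $H^{\mathrm{scal}}_{0,w/\hbar}$ is well-defined for every $\hbar>0$. So the upper bound is short and is not where the difficulty lies.

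The lower bound \eqref{lim1} is the heart of the matter, and this is where I expect to use the probabilistic (Feynman--Kac) machinery announced in the introduction and developed in Section \ref{bew}. The trace is $\mathrm{tr}(\mathrm{e}^{-\beta\hbar H^{\mathrm{scal}}_{0,w/\hbar}})=\sum_{x\in X}\mathrm{e}^{-\beta\hbar H^{\mathrm{scal}}_{0,w/\hbar}}(x,x)\,m(x)$, and a Feynman--Kac representation should express the diagonal as an expectation over the continuous-time minimal random walk $(Y_t)$ on $(X,b,m)$ started and conditioned to return to $x$, weighted by $\exp(-\int_0^{\beta\hbar}(w/\hbar)(Y_s)\,\mathrm{d}s)=\exp(-\frac1\hbar\int_0^{\beta\hbar}w(Y_s)\,\mathrm{d}s)$. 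The key point is a time rescaling: under the substitution $s=\hbar u$, the path measure of the walk run for time $\beta\hbar$ concentrates, as $\hbar\to0+$, on the constant path sitting at $x$ (the walk has no time to jump), while the exponent becomes $-\int_0^\beta w(Y_{\hbar u})\,\mathrm{d}u\to -\beta w(x)$. Thus each diagonal term should satisfy $\liminf_{\hbar\to0+}\mathrm{e}^{-\beta\hbar H^{\mathrm{scal}}_{0,w/\hbar}}(x,x)\,m(x)\ge \mathrm{e}^{-\beta w(x)}$ — more precisely one isolates the event $\{Y$ stays at $x$ on $[0,\beta\hbar]\}$, whose probability is $\mathrm{e}^{-\beta\hbar\,\mathrm{deg}_{m,b}(x)}\to 1$, and on which the Feynman--Kac weight is exactly $\mathrm{e}^{-\beta w(x)}$, discarding the (nonnegative, since we can reduce to $w\ge $ its infimum on a cofinite set via the decomposition, or handle the negative part with the infinitesimal bound) contribution of paths that do jump. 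Then Fatou's lemma applied to the sum over $x\in X$ (a sum of nonnegative terms, so Fatou is legitimate regardless of summability of $\mathrm{e}^{-\beta w}$) upgrades the pointwise $\liminf$ bound to
\begin{align*}
\liminf_{\hbar\to0+}\sum_{x\in X}\mathrm{e}^{-\beta\hbar H^{\mathrm{scal}}_{0,w/\hbar}}(x,x)\,m(x)\ge \sum_{x\in X}\liminf_{\hbar\to0+}\mathrm{e}^{-\beta\hbar H^{\mathrm{scal}}_{0,w/\hbar}}(x,x)\,m(x)\ge \sum_{x\in X}\mathrm{e}^{-\beta w(x)},
\end{align*}
which is precisely \eqref{lim1}. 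This Fatou step is also what makes the a priori summability of $\mathrm{e}^{-\beta w}$ unnecessary, exactly as the statement emphasizes.

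The main obstacle I anticipate is making the Feynman--Kac lower bound on the diagonal rigorous when $w$ is only infinitesimally $Q$-decomposable rather than bounded below: the integrand $\exp(-\frac1\hbar\int_0^{\beta\hbar}w(Y_s)\,\mathrm{d}s)$ need not be bounded, so the interchange of $\liminf$ with the path-space expectation cannot be done by dominated convergence and must instead be done by a one-sided (Fatou-type) argument on the path space, restricting to the no-jump event where everything is explicit and simply throwing away the rest. One must check that the Feynman--Kac formula from \cite{GKS-13,gmt} is available for $H^{\mathrm{scal}}_{0,w/\hbar}$ under the infinitesimal $Q$-decomposability hypothesis (this is presumably one of the technical inputs recalled in Section \ref{bew}), and that the negative part $w^-$, controlled via the Kato/infinitesimal bound, does not spoil the nonnegativity needed to drop paths — possibly at the cost of an $\hbar$-dependent but harmless correction that vanishes in the limit. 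Everything else — the time rescaling, the no-jump probability $\mathrm{e}^{-\beta\hbar\,\mathrm{deg}_{m,b}(x)}$, and the final Fatou sum — is routine once the representation is in hand.
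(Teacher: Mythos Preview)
Your proposal is correct and follows essentially the same route as the paper's proof: the Golden--Thompson bound (Theorem~\ref{sed}(b)) with $\tilde w=w/\hbar$, $t=\beta\hbar$ for \eqref{lim0}, and the Feynman--Kac trace formula (Theorem~\ref{sed}(a)) restricted to the no-jump event $\{N(\beta\hbar)=0\}$, together with $\mathbb{P}^x(N(\beta\hbar)=0)=\mathrm{e}^{-\beta\hbar\,\mathrm{deg}_{m,b}(x)}$ and Fatou over $x\in X$, for \eqref{lim1}. Your one superfluous concern is the nonnegativity of the discarded jump contribution: the Feynman--Kac weight $\exp\bigl(-\tfrac{1}{\hbar}\int_0^{\beta\hbar} w(\mathbb{X}_s)\,\mathrm{d}s\bigr)$ is a real exponential, hence automatically nonnegative for any sign of $w$, so no decomposition of $w$ or infinitesimal-bound argument is needed to drop those paths.
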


Note that we do not assume anything on $(X,b,m)$ here, and moreover, that we do not assume anything on $v$ other than $H^{\mathrm{scal}}_{0,w}$ be well-defined: Indeed, as the formulation of Theorem \ref{main} indicates, the limit of $\mathrm{tr}\big(\mathrm{e}^{-\beta\hbar H^{\mathrm{scal}}_{0,w/\hbar}}\big)$ as $\hbar\to 0+$ always exists, regardless of the fact whether or not one has $\sum\mathrm{e}^{-\beta w}<\infty$. This existence heavily requires that no \lq\lq{}magnetic effects\rq\rq{} are present, which admits the usage of arguments that rely on positivity preservation. \\
We immediately get the following Weyl-type geometric result:

\begin{Corollary}\label{mio} If the function $\tilde{m}:X\to \IR$, $x\mapsto -\mathrm{ln}(m(x))$ is infinitesimally $Q$-decomposable (for example, if $m\leq C$ for a constant $C>0$), then one has
$$
\mathrm{tr}\big(\mathrm{e}^{-\hbar H^{\mathrm{scal}}_{0,\tilde{m}/\hbar}}\big) \xrightarrow[]{\hbar\to 0+} \sum_{x\in X}m(x)\in [0,\infty].
$$
%In particular, if $m\equiv 1$ then
%$$
%\mathrm{tr}\big(\mathrm{e}^{- t H}\big) \xrightarrow[]{t\to 0+} \# X \in [0,\infty].
%$$
\end{Corollary}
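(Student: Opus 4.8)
The plan is to deduce this statement directly from Theorem \ref{main} applied with the potential $w := \tilde{m}$ and with the particular value $\beta := 1$. First I would dispose of the parenthetical sufficient condition: if $m \le C$ for some constant $C > 0$, then $\tilde{m} = -\ln m \ge -\ln C \ge -|\ln C|$, so $\tilde{m}$ is bounded from below by a constant; by Remark \ref{fdg}.1 this already guarantees that $\tilde{m}$ admits an infinitesimal $Q$-decomposition (take $\tilde{m}^{+} := \max(\tilde{m},0)$ and $\tilde{m}^{-} := \max(-\tilde{m},0)$, and note that $\tilde{m}^{-}$ is then a bounded, hence Kato-class, function). Thus in that case the hypothesis of the Corollary holds, and in general it is precisely the hypothesis of Theorem \ref{main} with $w = \tilde{m}$.

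Next, since $\tilde{m}$ is infinitesimally $Q$-decomposable, Remark \ref{fdg}.2 ensures that $H^{\mathrm{scal}}_{0,\tilde{m}/\hbar}$ is well-defined for every $\hbar > 0$, so Theorem \ref{main} applies verbatim (with $\beta = 1 > 0$, which is admissible as that theorem holds for all $\beta > 0$) and yields
\[
\mathrm{tr}\big(\mathrm{e}^{-\hbar H^{\mathrm{scal}}_{0,\tilde{m}/\hbar}}\big) \xrightarrow[]{\hbar \to 0+} \sum_{x \in X} \mathrm{e}^{-\tilde{m}(x)} \in [0,\infty].
\]
It then remains only to observe the elementary pointwise identity $\mathrm{e}^{-\tilde{m}(x)} = \mathrm{e}^{\ln m(x)} = m(x)$ for every $x \in X$, which turns the right-hand side into $\sum_{x \in X} m(x)$, and the assertion follows.

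There is essentially no obstacle here, as the statement is a formal specialization of Theorem \ref{main}; the only items requiring a word are the verification that the \lq\lq for example\rq\rq{} condition $m \le C$ implies infinitesimal $Q$-decomposability of $\tilde{m}$ (handled via Remark \ref{fdg}.1 as above) and the identity $\mathrm{e}^{-\tilde{m}} = m$, both immediate. I would also point out, exactly as after Theorem \ref{main}, that no summability of $m$ is needed: the limit exists in $[0,\infty]$ whether or not $\sum_{x \in X} m(x) < \infty$.
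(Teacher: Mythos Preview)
Your argument is correct and matches the paper's approach: the paper does not even spell out a proof of this Corollary, presenting it as an immediate consequence of Theorem~\ref{main} with $w=\tilde{m}$ and $\beta=1$, together with the identity $\mathrm{e}^{-\tilde{m}}=m$. Your justification of the parenthetical sufficient condition via Remark~\ref{fdg}.1 is exactly the intended one.
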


We return to the general vector bundle setting. Here, the following can be said:

\begin{Theorem}\label{main2} Assume that the potential $V$ on $F\to X$ is infinitesimally $Q$-decomposable, and that there is a $Q$-decomposable function $w:X\to\IR$ with $V\geq w$. Then for all $\beta>0$ with 
$$
\sum_{x\in X}\mathrm{e}^{-\beta w(x)}<\infty 
$$
one has
\begin{align}
&\mathrm{tr}(\mathrm{e}^{-\beta\hbar H_{\Phi,V/\hbar}}) \leq  \sum_{x\in X}\mathrm{tr}_x(\mathrm{e}^{-\beta V(x)})<\infty, \label{lim2}\\
&\mathrm{tr}(\mathrm{e}^{-\hbar\beta H_{\Phi,V/\hbar}}) \xrightarrow[]{\hbar\to 0+} \sum_{x\in X}\mathrm{tr}_x(\mathrm{e}^{-\beta V(x)}).\label{lim3}
\end{align}
\end{Theorem}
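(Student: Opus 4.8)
The plan is to reduce the vector bundle statement to the scalar Theorem \ref{main} by means of two covariant domination estimates, one for the upper bound and one for the lower bound, each coming from a Feynman-Kac formula. For the upper bound \eqref{lim2} I would use the covariant Golden-Thompson inequality established in \cite{gmt}: for all $t>0$,
\begin{align*}
\mathrm{tr}(\mathrm{e}^{-t H_{\Phi,V}})\leq \sum_{x\in X}\mathrm{tr}_x\big(\mathrm{e}^{-t V(x)}\big)\,p(t,x,x)\,m(x).
\end{align*}
Since $p(t,x,x)m(x)\leq 1$ (this is the probabilistic normalization remarked upon in the introduction), applying this with $t=\beta\hbar$ and potential $V/\hbar$ gives
\begin{align*}
\mathrm{tr}(\mathrm{e}^{-\beta\hbar H_{\Phi,V/\hbar}})\leq \sum_{x\in X}\mathrm{tr}_x\big(\mathrm{e}^{-\beta V(x)}\big)\,p(\beta\hbar,x,x)\,m(x)\leq \sum_{x\in X}\mathrm{tr}_x\big(\mathrm{e}^{-\beta V(x)}\big),
\end{align*}
and the right side is finite because $V\geq w$ forces $\mathrm{tr}_x(\mathrm{e}^{-\beta V(x)})\leq \nu\,\mathrm{e}^{-\beta w(x)}$ (each of the $\nu$ eigenvalues of $V(x)$ is $\geq w(x)$), whose sum over $x$ is finite by hypothesis. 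This simultaneously proves finiteness in \eqref{lim2} and gives $\limsup_{\hbar\to 0+}\mathrm{tr}(\mathrm{e}^{-\beta\hbar H_{\Phi,V/\hbar}})\leq \sum_x\mathrm{tr}_x(\mathrm{e}^{-\beta V(x)})$, since $p(\beta\hbar,x,x)m(x)\to 1$ as $\hbar\to 0+$ for each fixed $x$ and one can pass the limit inside the sum using the dominating summable majorant $\mathrm{tr}_x(\mathrm{e}^{-\beta V(x)})$ (dominated convergence for series).

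For the lower bound I would extract the diagonal of the covariant heat semigroup. Write $\mathrm{e}^{-\beta\hbar H_{\Phi,V/\hbar}}(x,x)\in\mathrm{End}(F_x)$ for its integral kernel on the diagonal; then $\mathrm{tr}(\mathrm{e}^{-\beta\hbar H_{\Phi,V/\hbar}})=\sum_{x\in X}\mathrm{tr}_x\big(\mathrm{e}^{-\beta\hbar H_{\Phi,V/\hbar}}(x,x)\big)m(x)$. By Fatou for series it suffices to show, for each fixed $x\in X$,
\begin{align*}
\liminf_{\hbar\to 0+}\mathrm{tr}_x\big(\mathrm{e}^{-\beta\hbar H_{\Phi,V/\hbar}}(x,x)\big)m(x)\geq \mathrm{tr}_x\big(\mathrm{e}^{-\beta V(x)}\big).
\end{align*}
Here I would invoke the covariant Feynman-Kac formula from \cite{GKS-13,gmt}, which represents $\mathrm{e}^{-\beta\hbar H_{\Phi,V/\hbar}}(x,x)m(x)$ as an expectation over the underlying minimal Markov chain (conditioned to start and end at $x$ in time $\beta\hbar$) of a parallel-transport/Feynman-Kac weight of the form $\mathbf{1}_{\{\text{stays at }x\}}\,\mathrm{e}^{-\beta V(x)}+(\text{contributions from paths that leave }x)$. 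As $\hbar\to 0+$, the probability that the chain leaves $x$ before time $\beta\hbar$ tends to $0$ (the exit rate from $x$ is $\mathrm{deg}_{m,b}(x)<\infty$), so the leading term $p_{\mathrm{stay}}(\beta\hbar,x)\,\mathrm{e}^{-\beta V(x)}$ with $p_{\mathrm{stay}}(\beta\hbar,x)\to 1$ survives; the remaining path contributions are controlled in norm by the same probabilistic bound together with infinitesimal $Q$-boundedness of $|V^-|$, which keeps the Feynman-Kac weight from blowing up. Taking $\mathrm{tr}_x$ and the liminf then yields the desired bound. (This is the exact vector-valued analogue of the scalar lower bound \eqref{lim1}, which the author presumably proves in Section \ref{bew} by this very mechanism; here one should be able to quote or lightly adapt that argument fiberwise, using that $\Phi_{x,x}=\mathbf{1}_x$ so no holonomy enters the on-diagonal stay-at-$x$ term.)

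Combining the two halves gives $\limsup\leq \sum_x\mathrm{tr}_x(\mathrm{e}^{-\beta V(x)})\leq \liminf$, hence the limit \eqref{lim3}. The main obstacle I anticipate is the interchange of $\liminf_{\hbar\to 0+}$ with the sum $\sum_{x\in X}$ in the lower bound: Fatou for series handles it for free in the direction needed, so the genuine technical work is the fiberwise small-time asymptotics of the covariant kernel, i.e. showing that the off-diagonal (left-$x$) path contributions to the Feynman-Kac expectation vanish as $\hbar\to0+$ uniformly enough — and that requires citing the precise probabilistic estimates of \cite{GKS-13,gmt} rather than rederiving them. A secondary point worth stating carefully is why $V\geq w$ with $w$ merely $Q$-decomposable (not necessarily infinitesimally so) is exactly what is needed: it is used only to produce the summable majorant $\nu\,\mathrm{e}^{-\beta w}$ for the dominated-convergence step on the upper side, while infinitesimal $Q$-decomposability of $V$ itself is what makes $H_{\Phi,V/\hbar}$ well-defined for every $\hbar>0$ (Remark \ref{fdg}.2) and makes the Feynman-Kac representation applicable.
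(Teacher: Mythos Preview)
Your upper bound \eqref{lim2} and the $\limsup$ half of \eqref{lim3} are handled correctly and match the paper's argument: apply the covariant Golden--Thompson inequality (the paper's (\ref{abs2}), i.e.\ Theorem~4 in \cite{gmt}) with $\tilde V=V/\hbar$, $\tilde w=w/\hbar$, $t=\beta\hbar$, and use $p(t,x,x)m(x)\le 1$ together with dominated convergence against the summable majorant $\nu\,\mathrm e^{-\beta w}$.

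The gap is in your $\liminf$ half. Splitting the diagonal Feynman--Kac expectation into ``stay at $x$'' plus ``leave $x$'' is the right move, but you then need the \emph{leave} contribution to vanish (not merely to be ``controlled''), since after taking $\mathrm{tr}_x$ it need not be nonnegative and hence cannot simply be dropped. Your suggested mechanism --- ``infinitesimal $Q$-boundedness of $|V^-|$ keeps the Feynman--Kac weight from blowing up'' --- is not the right one: form boundedness controls spectra, not pathwise $|\mathscr{A}^{\Phi,V/\hbar}_{\beta\hbar}|_x$, and the weight carries the factor $1/\hbar$. The actual control comes from the lower bound $V\ge w$ via the Gronwall--type estimate of Proposition~6(i) in \cite{gmt}, which gives
\[
\big|\mathscr{A}^{\Phi,V/\hbar}_{\beta\hbar}\big|_x\le \exp\!\Big(-\tfrac{1}{\hbar}\int_0^{\beta\hbar} w(\mathbb X_s)\,\Id s\Big),
\]
so the (absolute value of the) leave term is dominated by $\nu\,\mathbb E^x\!\big[1_{\{N(\beta\hbar)\ge 1,\ \mathbb X_{\beta\hbar}=x\}}\mathrm e^{-\frac1\hbar\int_0^{\beta\hbar} w(\mathbb X_s)\Id s}\big]$. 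This is a \emph{scalar} quantity; summing over $x$ and using (\ref{for}) and (\ref{deg}) one gets
\[
\sum_{x}\big(\dots\big)\ \le\ \nu\,\mathrm{tr}\big(\mathrm e^{-\beta\hbar H^{\mathrm{scal}}_{0,w/\hbar}}\big)-\nu\sum_{x}\mathrm e^{-\mathrm{deg}_{m,b}(x)\beta\hbar}\mathrm e^{-\beta w(x)}\ \xrightarrow[\hbar\to 0+]{}\ 0
\]
by Theorem~\ref{main} and dominated convergence. This reduction to the scalar theorem is the key step; it is not contained in \cite{GKS-13,gmt} and is precisely what the paper proves. Note also that your last paragraph misstates the role of $V\ge w$: it is \emph{not} used only to produce the majorant for dominated convergence, but is exactly what makes the Gronwall bound above available.

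Structurally the paper proceeds slightly differently from your $\limsup/\liminf$ scheme: it does not invoke Fatou at all in the vector case, but writes the trace via (\ref{for2}) as the sum of the $\{N(\beta\hbar)=0\}$ part (which equals $\sum_x \mathbb P^x(N(\beta\hbar)=0)\,\mathrm{tr}_x(\mathrm e^{-\beta V(x)})$ and converges to the answer by dominated convergence) and the $\{N(\beta\hbar)\ge 1\}$ part (which converges to $0$ by the scalar reduction above). Your Fatou route can be made to work, but only after you supply the same Gronwall\,$+$\,Theorem~\ref{main} input, at which point the direct decomposition is shorter.
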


Again, we do not assume anything on $(X,b,m)$ here, but now we cannot use monotonicity arguments to guarantee the inequality in (\ref{lim2}), which requries the lower spectral bound $V\geq w$ for some suitable $w$ with $\sum\mathrm{e}^{-\beta w}<\infty$.

\begin{Remark} If $V$ is an infinitesimally $Q$-decomposable potential on $F\to X$, then 
$$
w:=\min\mathrm{spec}(V(\bullet)):X\longrightarrow \IR
$$
is easily seen to define an infinitesimally $Q$-decomposable function with $V\geq w$, so that this part of the required control on $V$ is not restrictive.
\end{Remark}

In the case of scalar magnetic operators, Theorem \ref{main2} can be written in the following compact form:

\begin{Corollary} Let 
$$
\theta: \{b>0\}\longrightarrow  [-\pi,\pi]
$$ 
be a magnetic potential, and let $w:X\to\IR$ be infinitesimally $Q$-decomposable. Then for all $\beta>0$ with 
$$
\sum_{x\in X}\mathrm{e}^{-\beta w(x)}<\infty
$$
one has
\begin{align*}
\mathrm{tr}\big(\mathrm{e}^{-\beta\hbar H^{\mathrm{scal}}_{\theta,w/\hbar}}\big) \leq  \sum_{x\in X}\mathrm{e}^{-\beta w(x)}<\infty,\>\mathrm{tr}\big(\mathrm{e}^{-\beta\hbar H^{\mathrm{scal}}_{\theta,w/\hbar}}\big) \xrightarrow[]{\hbar\to 0+} \sum_{x\in X}\mathrm{e}^{-\beta w(x)}.
\end{align*}
\end{Corollary}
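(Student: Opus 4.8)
The plan is to derive this as a direct specialization of Theorem~\ref{main2} to the scalar case described in Remark~\ref{magi}. First I would observe that for a magnetic potential $\theta:\{b>0\}\to[-\pi,\pi]$, the unitary $b$-connection on the trivial line bundle $F_x=\{x\}\times\IC$ given by $\Phi_{x,y}=\mathrm{e}^{\mathrm{i}\theta(x,y)}$ produces, via the constructions in Section~\ref{ops}, exactly the form $Q^{\mathrm{scal}}_{\theta,0}$ and the operator $H^{\mathrm{scal}}_{\theta,0}$; adding the scalar potential $w$ (viewed as the endomorphism-valued potential $V(x)=w(x)\,\mathbf{1}_x$ acting on the one-dimensional fiber) yields $H^{\mathrm{scal}}_{\theta,w}$, and likewise $H^{\mathrm{scal}}_{\theta,w/\hbar}$ for $V/\hbar$. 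This identification is the content of Remark~\ref{magi} together with the Remark following the definition of $H_{\Phi,V}$, so no new work is needed there.

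Next I would check that the hypotheses of Theorem~\ref{main2} are met. Since $w$ is assumed infinitesimally $Q$-decomposable, the potential $V=w\,\mathbf{1}$ is infinitesimally $Q$-decomposable (the defining estimate $Q_{|V^-|}(f)\le\varepsilon Q(f)+C(\varepsilon)\|f\|_m^2$ for $V$ is literally the scalar estimate for $w$, since $|f|_x=|f(x)|$ in rank one). Infinitesimal $Q$-decomposability trivially implies $Q$-decomposability, so the required $w$ with $V\ge w$ in Theorem~\ref{main2} can be taken to be $w$ itself, with equality $V=w$. Thus for every $\beta>0$ with $\sum_{x\in X}\mathrm{e}^{-\beta w(x)}<\infty$, Theorem~\ref{main2} applies and gives
\begin{align*}
\mathrm{tr}\big(\mathrm{e}^{-\beta\hbar H^{\mathrm{scal}}_{\theta,w/\hbar}}\big)\le\sum_{x\in X}\mathrm{tr}_x\big(\mathrm{e}^{-\beta V(x)}\big)<\infty,\qquad
\mathrm{tr}\big(\mathrm{e}^{-\beta\hbar H^{\mathrm{scal}}_{\theta,w/\hbar}}\big)\xrightarrow[]{\hbar\to0+}\sum_{x\in X}\mathrm{tr}_x\big(\mathrm{e}^{-\beta V(x)}\big).
\end{align*}

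Finally I would evaluate the fiberwise trace: since $F_x$ is one-dimensional and $V(x)$ acts as multiplication by $w(x)$, the operator $\mathrm{e}^{-\beta V(x)}$ is multiplication by $\mathrm{e}^{-\beta w(x)}$, hence $\mathrm{tr}_x(\mathrm{e}^{-\beta V(x)})=\mathrm{e}^{-\beta w(x)}$. Substituting this into both displayed relations yields exactly the two claimed assertions, and in particular $\sum_x\mathrm{tr}_x(\mathrm{e}^{-\beta V(x)})=\sum_x\mathrm{e}^{-\beta w(x)}<\infty$ matches the summability hypothesis. There is no genuine obstacle here; the only point requiring a word of care is the bookkeeping that the scalar constructions of Remark~\ref{magi} are the rank-one instances of the bundle constructions used in Theorem~\ref{main2}, so that the hypotheses translate verbatim — this is exactly what the parenthetical remarks in Section~\ref{ops} were set up to guarantee.
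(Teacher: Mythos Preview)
Your proposal is correct and follows exactly the paper's intended approach: the Corollary is stated immediately after Theorem~\ref{main2} as its scalar specialization, with no separate proof given, precisely because the identification of $H^{\mathrm{scal}}_{\theta,w}$ with $H_{\Phi,V}$ for $\Phi_{x,y}=\mathrm{e}^{\mathrm{i}\theta(x,y)}$ and $V=w\,\mathbf{1}$ on the trivial line bundle (Remark~\ref{magi}) makes the deduction immediate. Your verification of the hypotheses and the evaluation $\mathrm{tr}_x(\mathrm{e}^{-\beta V(x)})=\mathrm{e}^{-\beta w(x)}$ in rank one is exactly the bookkeeping the paper leaves to the reader.
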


\section{Proof of Theorem \ref{main} and Theorem \ref{main2}}\label{bew}

We start by recalling some probabilistic facts from \cite{GKS-13,gmt}: Let $(\Omega,\mathscr{F},\mathbb{P})$ be a fixed probability space, let $(Y_n)_{n\in\mathbb{N}_0}$ be a discrete time Markov chain with values in $X$ such that
\begin{align}
\mathbb{P}(Y_n=x|Y_{n-1}=y)=\frac{b(x,y)}{\textrm{deg}_{1,b}(x)}\qquad \textrm{for all }n\in\IN.\label{trans}
\end{align}
Let $(\xi_n)_{n\in\mathbb{N}_{0}}$ be a sequence of independent exponentially distributed random variables with parameter $1$, which are independent of $(Y_n)_{n\in\mathbb{N}_{0}}$. For $n\in\IN$ set
\[
S_n:=\frac{1}{\textrm{deg}_{m,b}(Y_{n-1})}\xi_n,\qquad \tau_n:=S_1+S_2+\dots+S_n,
\]
where $\tau_0:=0$, so that we get the predictable stopping time
$$
\tau:=\sup\{\tau_n|n\in\mathbb{N}_{0} \}  >0,
$$
which allows us to define the maximally defined, right-continuous process
$$
\mathbb{X}\colon [0,\tau)\times \Omega\longrightarrow  X,\>\>
\mathbb{X}|_{[\tau_n,\tau_{n+1})\times \Omega}:=Y_n,\>\>n\in\mathbb{N}_{0},
$$
which has the $\tau_n$\rq{}s as its jump times and the $S_n$\rq{}s as its holding times. Ultimately, with $\mathbb{P}^{x}:=\mathbb{P}(\bullet| \mathbb{X}_{0}=x)$ and $\IFF_*=(\IFF_t)_{t\geq 0}$ the natural filtration of $\IFF$ given by $\mathbb{X}$, it turns out that $(\Omega, \IFF_*, \mathbb{X}, (\mathbb{P}^x)_{x\in X})$ is a reversible strong Markov process. Let us introduce the process
$$
N(t):=\sup\{n\in\IN_0 | \tau_n\leq t\}:\Omega\longrightarrow \IN_0\cup \{\infty\},
$$
which at a fixed time $t\geq 0$ counts the jumps of $\mathbb{X}$ until $t$, so that
\begin{align}
\{N(t)<\infty\}=\{t<\tau\}\>\text{ for all $t\geq 0$.}\label{bez}
\end{align}

Then we have:

\begin{Lemma} The following statements hold for all $t\geq 0$, $x,y\in X$,
\begin{align}
&\mathbb{P}\big(b(\mathbb{X}_{\tau_n},\mathbb{X}_{\tau_{n+1}}) >0\text{\emph{ for all $n\in\IN_0$}}\big) =1,\label{jump}\\
&p(t,x,y)m(y)=\mathbb{P}^x(\mathbb{X}_{t}=y),\label{deg1}\\
&\mathbb{P}^{x}(N(t)=0) = \mathrm{e}^{-\mathrm{deg}_{m,b}(x)t}.\label{deg}
\end{align}
\end{Lemma}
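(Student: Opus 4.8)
The plan is to verify the three identities directly from the probabilistic construction, since each of them is essentially a translation of a defining property of the process $\mathbb{X}$ into the language of the heat kernel or the jump counting function. I would treat them in the order (\ref{jump}), (\ref{deg}), (\ref{deg1}), as the first two are almost immediate from the construction and the third requires slightly more care.

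\textbf{Identity (\ref{jump}).} By construction $\mathbb{X}_{\tau_n}=Y_n$, so the event in question is $\{b(Y_n,Y_{n+1})>0\text{ for all }n\in\IN_0\}$. From the transition law (\ref{trans}), $\mathbb{P}(Y_{n+1}=x\mid Y_n=y)=b(x,y)/\mathrm{deg}_{1,b}(y)$ (up to the indexing convention in (\ref{trans})), which is strictly positive exactly when $b(x,y)>0$ and vanishes otherwise. Hence for every $n$ one has $\mathbb{P}(b(Y_n,Y_{n+1})>0)=1$, and a countable intersection of almost sure events is almost sure, which gives (\ref{jump}). A minor point to address is that on $\{n\geq N(\infty)\}$, i.e.\ after the process has been absorbed/exploded, $Y_n$ may be only formally defined; but this happens on a $\mathbb{P}$-null set or can be handled by the usual convention, so it does not affect the statement.

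\textbf{Identity (\ref{deg}).} Here $\{N(t)=0\}=\{\tau_1>t\}$, and $\tau_1=S_1=\xi_1/\mathrm{deg}_{m,b}(Y_0)$. Under $\mathbb{P}^x$ we have $Y_0=x$ deterministically, and $\xi_1$ is exponentially distributed with parameter $1$ independent of $Y_0$, so $\tau_1$ is exponential with parameter $\mathrm{deg}_{m,b}(x)$. Therefore $\mathbb{P}^x(\tau_1>t)=\mathrm{e}^{-\mathrm{deg}_{m,b}(x)t}$, which is (\ref{deg}).

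\textbf{Identity (\ref{deg1}).} This is the statement that the transition probabilities of the continuous-time process $\mathbb{X}$ are given by the heat kernel of $H=H^{\mathrm{scal}}_{0,0}$. I would argue that $\mathbb{X}$ is the minimal Markov process associated with the regular Dirichlet form $Q$ on $\ell^2_m(X)$: the holding rate at $x$ is $\mathrm{deg}_{m,b}(x)$ and the jump probabilities are $b(x,y)/\mathrm{deg}_{1,b}(x)$, which matches the jump kernel $b(x,y)/m(x)$ of the generator $\tilde H$ read off from its explicit form, and this is precisely the data of the Dirichlet-form process. Consequently $\mathbb{P}^x(\mathbb{X}_t=y)=\langle \mathrm{e}^{-tH}\mathbf 1_{\{y\}},\mathbf 1_{\{x\}}\rangle_m/(\text{normalization})$; unwinding the definition of the integral kernel $p$ via $\mathrm{e}^{-tH}f(x)=\sum_y p(t,x,y)f(y)m(y)$ gives $\langle \mathrm{e}^{-tH}\mathbf 1_{\{y\}},\mathbf 1_{\{x\}}\rangle_m=p(t,x,y)m(y)m(x)$, and dividing by $\langle\mathbf 1_{\{x\}},\mathbf 1_{\{x\}}\rangle_m=m(x)$ yields $\mathbb{P}^x(\mathbb{X}_t=y)=p(t,x,y)m(y)$, which is (\ref{deg1}). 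Rather than reprove the identification of $\mathbb{X}$ with the Dirichlet-form process from scratch, I would simply cite the construction in \cite{GKS-13,gmt}, where this correspondence is established. The main obstacle is thus entirely in (\ref{deg1}) and is bookkeeping in nature: being careful about the symmetry normalization (the factor $m(y)$ versus $m(x)$) and about the fact that $p(t,\cdot,\cdot)$ is the kernel with respect to the measure $m$, not counting measure. Identities (\ref{jump}) and (\ref{deg}) require nothing beyond the construction of $(Y_n)$ and $(\xi_n)$.
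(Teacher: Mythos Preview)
Your proposal is correct and follows essentially the same approach as the paper: the paper dispatches (\ref{jump}) as ``immediate from the definitions,'' proves (\ref{deg}) by the same chain $\mathbb{P}^x(N(t)=0)=\mathbb{P}^x(t<\tau_1)=\mathbb{P}^x(\mathrm{deg}_{m,b}(x)t<\xi_1)=\mathrm{e}^{-\mathrm{deg}_{m,b}(x)t}$, and for (\ref{deg1}) simply cites the Feynman--Kac formula for $\mathrm{e}^{-tH}$ from \cite{GKS-13}, which is precisely what you propose to do. Your write-up is more verbose but contains no extra ideas; the only unnecessary bit is your aside about $\{n\geq N(\infty)\}$, since the $Y_n$ are defined as a genuine discrete-time chain independently of the explosion of $\mathbb{X}$.
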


\begin{proof} The relation (\ref{jump}), which simply means that $\mathbb{X}$ can only jump to neighbors, follows immediately from the definitions, (\ref{deg1}) is well-known (and follows from example from the Feynman-Kac formula for $\mathrm{e}^{-tH}$ \cite{GKS-13}), and (\ref{deg}) follows from
$$
\mathbb{P}^{x}(N(t)=0) =\mathbb{P}^{x}(t<\tau_1)=\mathbb{P}^{x}(\mathrm{deg}_{m,b}(x)t<\zeta_1) =\mathrm{e}^{-\mathrm{deg}_{m,b}(x)t},
$$
which follows immediately from the definitions.
 \end{proof}
 
For $x,y\in X$ and $t>0$ let $\mathbb{P}^{x,y}_t$ be the probability measure on $\{t<\tau\}$ given by $\mathbb{P}^{x,y}_t=\mathbb{P}^{x}(\bullet|\mathbb{X}_t=y)$. With these preparations, we can state the following result, which the proof of Theorem \ref{main} relies on:

\begin{Theorem}\label{sed} Assume that $\tilde{w}:X\to\IR$ is $Q$-decomposable. Then the following assertions hold true:

\emph{a)} For any $t>0$ one has 
\begin{align}
\mathrm{tr}\big(\mathrm{e}^{-t H^{\mathrm{scal}}_{0,\tilde{w}}}\big)=\sum_{x\in X}p(t,x,x) \mathbb{E}^{x,x}_t\left[\mathrm{e}^{-\int^t_0 \tilde{w}(\mathbb{X}_s)\Id s}\right]m(x)\in [0,\infty].\label{for}
\end{align}

\emph{b)} For any $t>0$ one has
\begin{align}
\mathrm{tr}\big(\mathrm{e}^{-t H^{\mathrm{scal}}_{0,\tilde{w}}}\big)\leq \sum_{x\in X}\mathrm{e}^{-t\tilde{w}(x)} \in [0,\infty].\label{abs}
\end{align}
\end{Theorem}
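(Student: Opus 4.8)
The plan is to establish a Feynman--Kac representation for the heat semigroup of $H^{\mathrm{scal}}_{0,\tilde{w}}$, take its trace by summing the on-diagonal kernel, and then use the probabilistic structure of $\mathbb{X}$ to bound the bridge expectation by Jensen's inequality. For part a), I would start from the scalar Feynman--Kac formula established in \cite{GKS-13,gmt}, which (since $\tilde{w}$ is $Q$-decomposable, so that $H^{\mathrm{scal}}_{0,\tilde{w}}$ is well-defined) gives
$$
\mathrm{e}^{-tH^{\mathrm{scal}}_{0,\tilde{w}}}f(x)=\mathbb{E}^{x}\!\left[\mathrm{e}^{-\int_0^t\tilde{w}(\mathbb{X}_s)\,\Id s}f(\mathbb{X}_t)\mathbf{1}_{\{t<\tau\}}\right]
$$
for $f\in\ell^2_m(X)$, and correspondingly an integral kernel $k_{\tilde w}(t,x,y)$ with respect to $m$. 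The diagonal of this kernel is obtained by conditioning on $\{\mathbb{X}_t=y\}$: writing $\mathbb{E}^x[\cdots\mathbf{1}_{\{\mathbb{X}_t=y\}}]=\mathbb{P}^x(\mathbb{X}_t=y)\,\mathbb{E}^{x,y}_t[\cdots]$ and using (\ref{deg1}), one gets $k_{\tilde w}(t,x,y)=p(t,x,y)\,\mathbb{E}^{x,y}_t[\mathrm{e}^{-\int_0^t\tilde w(\mathbb{X}_s)\Id s}]$. Then $\mathrm{tr}(\mathrm{e}^{-tH^{\mathrm{scal}}_{0,\tilde w}})=\sum_{x}k_{\tilde w}(t,x,x)m(x)$, which is exactly (\ref{for}); since $\mathrm{e}^{-tH^{\mathrm{scal}}_{0,\tilde w}}$ is a positivity-preserving self-adjoint operator with positive kernel, the trace is the (possibly infinite) sum of the nonnegative diagonal entries, so the identity holds in $[0,\infty]$ with no integrability assumption on $\tilde w$. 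Care is needed to justify that the trace of a positivity-preserving operator equals the sum of diagonal kernel values; this is standard (e.g. via $\mathrm{tr}(A)=\sum_x\langle A^{1/2}\delta_x/\sqrt{m(x)},A^{1/2}\delta_x/\sqrt{m(x)}\rangle$-type arguments together with the semigroup property $\mathrm{e}^{-tH^{\mathrm{scal}}_{0,\tilde w}}=(\mathrm{e}^{-(t/2)H^{\mathrm{scal}}_{0,\tilde w}})^2$), and I would cite the corresponding statement from \cite{GKS-13,gmt}.

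For part b), the key point is that conditionally on $\{\mathbb{X}_t=x\}$ and $\{\mathbb{X}_0=x\}$, the bridge measure $\mathbb{P}^{x,x}_t$ still gives full mass to paths, and we apply Jensen's inequality to the convex function $u\mapsto\mathrm{e}^{-u}$:
$$
\mathbb{E}^{x,x}_t\!\left[\mathrm{e}^{-\int_0^t\tilde w(\mathbb{X}_s)\,\Id s}\right]\geq \mathrm{e}^{-\mathbb{E}^{x,x}_t\left[\int_0^t\tilde w(\mathbb{X}_s)\,\Id s\right]}.
$$
This gives a lower bound, which is the wrong direction for (\ref{abs}). So instead I would restrict the expectation in (\ref{for}) to the event $\{N(t)=0\}$ of no jumps before time $t$: on this event $\mathbb{X}_s=x$ for all $s\in[0,t]$, hence $\mathbb{X}_t=x$ automatically and $\int_0^t\tilde w(\mathbb{X}_s)\,\Id s=t\,\tilde w(x)$. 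Dropping the remaining nonnegative contributions is not automatically valid because $\tilde w$ may be negative, so a cleaner route is to bound the diagonal kernel directly: $p(t,x,x)m(x)\,\mathbb{E}^{x,x}_t[\mathrm{e}^{-\int_0^t\tilde w(\mathbb{X}_s)\Id s}]=\mathbb{E}^x[\mathrm{e}^{-\int_0^t\tilde w(\mathbb{X}_s)\Id s}\mathbf{1}_{\{\mathbb{X}_t=x\}}]$, and one needs to compare this with $\mathrm{e}^{-t\tilde w(x)}$. The natural tool here is the Golden--Thompson inequality (\ref{gold2}) already available from \cite{GKS-13,gmt}: applied with the potential $\tilde w$ it reads $\mathrm{tr}(\mathrm{e}^{-t H^{\mathrm{scal}}_{0,\tilde w}})\leq\sum_x p(t,x,x)\,\mathrm{e}^{-t\tilde w(x)}m(x)$, and then the fundamental pointwise bound $p(t,x,x)m(x)=\mathbb{P}^x(\mathbb{X}_t=x)\leq 1$ from (\ref{deg1}) immediately yields (\ref{abs}). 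So part b) is essentially a repackaging of (\ref{gold2}) together with the probabilistic interpretation $p(t,x,y)m(y)\in[0,1]$ emphasized in the introduction.

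The main obstacle I anticipate is part a): making the passage from the Feynman--Kac formula for the semigroup to a clean, fully justified formula for the \emph{trace} as a sum of conditioned bridge expectations, in the generality of a possibly locally infinite graph and a merely $Q$-decomposable (possibly non-summable, sign-changing) potential $\tilde w$. In particular one must be careful that $\mathrm{e}^{-tH^{\mathrm{scal}}_{0,\tilde w}}$ need not be trace class, so (\ref{for}) should be read as an identity in $[0,\infty]$; the interchange of trace and the defining sum, and the identification of the integral kernel with the bridge expectation, should be reduced to the corresponding technical statements in \cite{GKS-13,gmt} rather than reproved. Part b) is comparatively routine once (\ref{gold2}) and (\ref{deg1}) are invoked.
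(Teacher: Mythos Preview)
Your proposal is correct and aligns with the paper's own treatment: the paper's proof is the single sentence ``This is precisely Theorem 5.3 in \cite{GKS-13},'' so both parts are obtained by citation rather than by a new argument, and your plan ultimately defers to exactly the same references (the Feynman--Kac formula and the Golden--Thompson bound (\ref{gold2}) from \cite{GKS-13,gmt}, the latter combined with $p(t,x,x)m(x)\le 1$). Your self-correction in part~b) --- abandoning Jensen and the no-jump restriction because they go the wrong way or fail for sign-changing $\tilde w$ --- is exactly in line with the paper's explicit remark that (\ref{abs}) is ``highly nontrivial'' and does not follow directly from (\ref{for}).
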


\begin{proof} This is precisely Theorem 5.3 in \cite{GKS-13}.
\end{proof}

Here, it should be noted that the proof of the Golden-Thompson inequality (\ref{abs}) is highly nontrivial. In particular, (\ref{abs}) does not follow directly from (\ref{for}). Now we can give the actual proof of Theorem \ref{main}: 

\begin{proof}[Proof of Theorem \ref{main}]
Applying (\ref{abs}) with $\tilde{w}=w/\hbar$ (keeping Remark \ref{fdg}.2 in mind) and $t=\beta\hbar$ and using (\ref{deg1}) immediately implies
$$
\mathrm{tr}\big(\mathrm{e}^{-\beta\hbar H^{\mathrm{scal}}_{0,w/\hbar}}\big)\leq \sum_{x\in X}\mathrm{e}^{-\beta w(x)}.
$$
In order to see that $\liminf_{\hbar\to 0+}\mathrm{tr}(\dots)$ is bounded from below by $\sum\mathrm{e}^{-\beta w}$, we remark that applying (\ref{for}) with $\tilde{w}=w/\hbar$ and $t= \beta\hbar$ implies the first equality in
\begin{align}
\mathrm{tr}\big(\mathrm{e}^{-\beta\hbar H^{\mathrm{scal}}_{0,w/\hbar}}\big)&=\sum_{x\in X}\mathbb{P}^{x}(\mathbb{X}_{\beta\hbar}=x) \mathbb{E}^{x,x}_{\beta\hbar}\left[\mathrm{e}^{-\f{1}{\hbar}\int^{\beta\hbar}_0 w(\mathbb{X}_s)\Id s}\right]\nn\\
&\geq \sum_{x\in X}\mathbb{P}^{x}(\mathbb{X}_{\beta\hbar}=x) \mathbb{E}^{x,x}_{\beta\hbar}\left[1_{\{N(\beta\hbar)=0\}}\mathrm{e}^{-\f{1}{\hbar}\int^{\beta\hbar}_0 w(\mathbb{X}_s)\Id s}\right]\nn\\
&= \sum_{x\in X}\mathbb{P}^{x}(\mathbb{X}_{\beta\hbar}=x) \mathbb{P}^{x,x}_{\beta\hbar}(N(\beta\hbar)=0)\mathrm{e}^{-\beta w(x)}\nn\\
&=\sum_{x\in X}\mathbb{P}^{x}(\mathbb{X}_{\beta\hbar}=x) \f{\mathbb{P}^{x}(N(\beta\hbar)=0, \mathbb{X}_{\beta\hbar}=x)}{\mathbb{P}^{x}(\mathbb{X}_{\beta\hbar}=x)}\mathrm{e}^{-\beta w(x)}\nn\\
&=\sum_{x\in X}\mathbb{P}^{x}(N(\beta\hbar)=0)\mathrm{e}^{-\beta w(x)}\nn\\
&= \sum_{x\in X} \mathrm{e}^{-\mathrm{deg}_{m,b}(x)\beta\hbar}\mathrm{e}^{-\beta w(x)},\nn
\end{align}
where we have used (\ref{deg}) for the last equality. Thus, from Fatou\rq{}s lemma we can conclude
$$
\liminf_{\hbar\to 0+}\mathrm{tr}\big(\mathrm{e}^{-\beta\hbar H^{\mathrm{scal}}_{0,w/\hbar}}\big)\geq \sum_{x\in X}\mathrm{e}^{-\beta w(x)},
$$
which completes the proof.

\end{proof}

Before we come to the proof of Theorem \ref{main2}, we first have to explain the additional ingredients of the Feynman-Kac formula for $\mathrm{e}^{-t H_{\Phi,V}}$: Firstly, using (\ref{bez}) and (\ref{jump}), the $\Phi$-parallel transport along the paths of $\mathbb{X}$ is well-defined by 
\begin{align*}
&\pa^{\Phi}: [0,\tau)\times \Omega\longrightarrow F \boxtimes F^* =\bigsqcup_{(x,y)\in X\times X}\mathrm{Hom}(F_y,F_x)\\
&\pa^{\Phi}_t:=
\begin{cases}
\mathbf{1}_{\mathbb{X}_0},\> \text{ if $N(t)=0$}\\ \\
 \Phi_{\mathbb{X}_{\tau_{N(t)-1}},\mathbb{X}_{\tau_{N(t)}}}\cdots \Phi_{\mathbb{X}_{\tau_{0}},\mathbb{X}_{\tau_{1}}}\>\text{ else}
\end{cases}\in \mathrm{Hom}(F_{\mathbb{X}_0},F_{\mathbb{X}_t}),
\end{align*}
which gives a pathwise unitary process, and furthermore we have the process 
$$
\mathscr{A}^{\Phi,V}:[0,\tau)\times\Omega\longrightarrow \mathrm{End}(F)
$$
which is given by the time ordered exponential
\begin{align*}
&\mathscr{A}_{t}^{\Phi,V}-\mathbf{1}_{\mathbb{X}_0}\\
&=\sum_{n=1}^{\infty}(-1)^n\int_{t\sigma_n}\pa^{\Phi,-1}_{s_1}V(\mathbb{X}_{s_1})\pa^{\Phi}_{s_1}\cdots \pa^{\Phi,-1}_{s_1}V(\mathbb{X}_{s_1})\pa^{\Phi}_{s_1} \,\Id s_1\,\cdots\,\Id s_n\\
&\in\mathrm{End}(F_{\mathbb{X}_0}) ,
\end{align*}
where
\[
t\sigma_n:=\big\{(s_1,s_2,\dots,s_n)| \>0\leq s_1\leq s_2\leq\cdots\leq s_n\leq t\big\}\subset \IR^n
\]
denotes the $t$-scaled standard $n$-simplex, and where the series converges pathwise absolutely and locally uniformly in $t$. With these preparations, we have the following covariant anaologue of Theorem \ref{sed}:

\begin{Theorem} Assume that the potential $\tilde{V}$ on $F\to X$ is $Q$-decomposable. Then the following assertions hold true:

\emph{a)} For any $t>0$ one has 
\begin{align}
\mathrm{tr}(\mathrm{e}^{-t H_{\Phi,\tilde{V}}})=\sum_{x\in X}p(t,x,x) \mathbb{E}^{x,x}_t\left[\mathscr{A}^{\Phi,\tilde{V}}_t\pa^{\Phi,-1}_t\right]m(x)\in [0,\infty].\label{for2}
\end{align}

\emph{b)} If there is a $Q$-decomposable function $\tilde{w}:X\to\IR$ with $\tilde{V}\geq \tilde{w}$ and 
$$
\sum_{x\in X}\mathrm{e}^{-\tilde{w}(x)}<\infty, 
$$
then for any $t>0$ one has
\begin{align}
\mathrm{tr}(\mathrm{e}^{-t H_{\Phi,\tilde{V}}})\leq \sum_{x\in X}\mathrm{tr}_x(\mathrm{e}^{-t\tilde{V}(x)}) <\infty.\label{abs2}
\end{align}
\end{Theorem}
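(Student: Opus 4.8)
The plan is to read off part a) from the covariant Feynman--Kac formula for $\mathrm{e}^{-tH_{\Phi,\tilde{V}}}$ developed in \cite{gmt}, and to obtain the Golden--Thompson bound of part b) first for finite graphs, where it reduces to the matrix Golden--Thompson inequality, and then in general by an exhaustion argument. The one non-formal ingredient I expect to be delicate is the passage to the limit in that exhaustion: it is the exact analogue of the step already flagged as highly nontrivial for the scalar inequality (\ref{abs}).

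For part a): by \cite{gmt}, for $Q$-decomposable $\tilde{V}$ the operator $\mathrm{e}^{-tH_{\Phi,\tilde{V}}}$ has a pointwise defined integral kernel $\mathrm{e}^{-tH_{\Phi,\tilde{V}}}(x,y)\in\mathrm{Hom}(F_y,F_x)$ given by $\mathrm{e}^{-tH_{\Phi,\tilde{V}}}(x,y)\,m(y)=\mathbb{E}^x[\,1_{\{\mathbb{X}_t=y\}}\,\mathscr{A}^{\Phi,\tilde{V}}_t\,\pa^{\Phi,-1}_t\,]$. Since $H_{\Phi,\tilde{V}}$ is self-adjoint and semi-bounded, $\mathrm{e}^{-tH_{\Phi,\tilde{V}}}$ is a nonnegative operator, so expanding its trace in the orthonormal basis of $\Gamma_{\ell^2_m}(X,F)$ built from fibrewise orthonormal bases of the $F_x$ gives, with all terms nonnegative and hence the reorganisation of the double sum legitimate,
\begin{align*}
\mathrm{tr}(\mathrm{e}^{-tH_{\Phi,\tilde{V}}})=\sum_{x\in X}\mathrm{tr}_x\big(\mathrm{e}^{-tH_{\Phi,\tilde{V}}}(x,x)\big)\,m(x)\in[0,\infty].
\end{align*}
Inserting the Feynman--Kac kernel at $y=x$, pulling $\mathrm{tr}_x$ and $\mathbb{E}^x$ through each other, and using $\mathbb{E}^x[1_{\{\mathbb{X}_t=x\}}(\,\cdot\,)]=\mathbb{P}^x(\mathbb{X}_t=x)\,\mathbb{E}^{x,x}_t[\,\cdot\,]$ together with $\mathbb{P}^x(\mathbb{X}_t=x)=p(t,x,x)m(x)$ from (\ref{deg1}) yields (\ref{for2}); under $\mathbb{P}^{x,x}_t$ one has $\mathbb{X}_0=\mathbb{X}_t=x$, so indeed $\mathscr{A}^{\Phi,\tilde{V}}_t\pa^{\Phi,-1}_t\in\mathrm{End}(F_x)$ and the trace inside the expectation makes sense.

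For part b), first assume $(X,b)$ is finite, so $H_{\Phi,0}$ lives on a finite-dimensional space. The matrix Golden--Thompson inequality gives $\mathrm{tr}(\mathrm{e}^{-tH_{\Phi,\tilde{V}}})\le\mathrm{tr}(\mathrm{e}^{-tH_{\Phi,0}}\mathrm{e}^{-t\tilde{V}})$, and since $\tilde{V}$ is a potential, $\mathrm{e}^{-t\tilde{V}}$ has the block-diagonal kernel $\mathrm{e}^{-t\tilde{V}}(x,y)=\delta_{xy}\,\mathrm{e}^{-t\tilde{V}(x)}$, whence $\mathrm{tr}(\mathrm{e}^{-tH_{\Phi,0}}\mathrm{e}^{-t\tilde{V}})=\sum_{x}\mathrm{tr}_x\big(\mathrm{e}^{-tH_{\Phi,0}}(x,x)\,\mathrm{e}^{-t\tilde{V}(x)}\big)m(x)$. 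The key point is the pointwise operator estimate $0\le\mathrm{e}^{-tH_{\Phi,0}}(x,x)\,m(x)\le\mathbf{1}_x$: nonnegativity is part a) with $\tilde{V}=0$, and $\big|\mathrm{e}^{-tH_{\Phi,0}}(x,x)\big|_x\,m(x)\le p(t,x,x)m(x)=\mathbb{P}^x(\mathbb{X}_t=x)\le1$ follows from the domination of $\mathrm{e}^{-tH_{\Phi,0}}$ by the sub-Markovian $\mathrm{e}^{-tH}$, a consequence of $Q_{\Phi,0}(f)\ge Q(|f|)$ (Theorem 2 in \cite{gmt}). Since $\mathrm{tr}_x(AC)\le\mathrm{tr}_x(C)$ whenever $0\le A\le\mathbf{1}_x$ and $C\ge0$ (write $\mathrm{tr}_x(AC)=\mathrm{tr}_x(C^{1/2}AC^{1/2})$), this gives $\mathrm{tr}(\mathrm{e}^{-tH_{\Phi,\tilde{V}}})\le\sum_x\mathrm{tr}_x(\mathrm{e}^{-t\tilde{V}(x)})$ for finite $(X,b)$.

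For general $(X,b)$, exhaust $X$ by finite sets $X_n\uparrow X$ and pass to the Dirichlet restrictions $H^{(n)}_{\Phi,\tilde{V}}$, i.e.\ to $Q_{\Phi,\tilde{V}}$ with domain the sections supported in $X_n$; these forms increase to $Q_{\Phi,\tilde{V}}$ on the form core $\Gamma_c(X,F)$ and are uniformly bounded below, since $\tilde{V}\ge\tilde{w}\ge -M_0$ for some $M_0$ ($\tilde{w}$ being bounded below because $\sum_x\mathrm{e}^{-\tilde{w}(x)}<\infty$). Monotone convergence of forms gives strong convergence $\mathrm{e}^{-tH^{(n)}_{\Phi,\tilde{V}}}\to\mathrm{e}^{-tH_{\Phi,\tilde{V}}}$, hence $\mathrm{e}^{-tH^{(n)}_{\Phi,\tilde{V}}}(x,x)\to\mathrm{e}^{-tH_{\Phi,\tilde{V}}}(x,x)$ in $\mathrm{End}(F_x)$ for each $x$, so by the trace identity of part a) and Fatou's lemma
\begin{align*}
\mathrm{tr}(\mathrm{e}^{-tH_{\Phi,\tilde{V}}})\le\liminf_{n}\mathrm{tr}(\mathrm{e}^{-tH^{(n)}_{\Phi,\tilde{V}}})\le\sum_{x\in X}\mathrm{tr}_x(\mathrm{e}^{-t\tilde{V}(x)}),
\end{align*}
the last inequality by the finite-graph bound on each $X_n$; and the right-hand side is finite because $\mathrm{tr}_x(\mathrm{e}^{-t\tilde{V}(x)})\le\nu\,\mathrm{e}^{-t\tilde{w}(x)}$ and $\sum_x\mathrm{e}^{-t\tilde{w}(x)}<\infty$ by the hypothesis on $\tilde{w}$. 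I expect this last paragraph---controlling the trace in the exhaustion limit---to be the main obstacle, with the scalar argument of \cite{GKS-13} as the template; a cheaper route, giving only the weaker bound $\nu\sum_x\mathrm{e}^{-t\tilde{w}(x)}$, is to estimate directly in (\ref{for2}): the unitarity of $\pa^{\Phi}$ and the operator inequality $\mathrm{e}^{-s\tilde{V}(y)}\le\mathrm{e}^{-s\tilde{w}(y)}\mathbf{1}_y$ bound $|\mathrm{tr}_x(\mathscr{A}^{\Phi,\tilde{V}}_t\pa^{\Phi,-1}_t)|$ by $\nu\,\mathrm{e}^{-\int_0^t\tilde{w}(\mathbb{X}_s)\Id s}$, whereupon (\ref{for}) and (\ref{abs}) for $\tilde{w}$ finish it.
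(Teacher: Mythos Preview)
Your part a) matches the paper: both obtain (\ref{for2}) from the covariant Feynman--Kac kernel of \cite{gmt}. In fact the paper's entire proof of this theorem is the single sentence ``Noting that $p(t,x,x)m(x)\le 1$ by (\ref{deg1}), these results follow immediately from Theorem~4 in \cite{gmt}.''

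For part b) you take a genuinely different route. The paper simply invokes the covariant Golden--Thompson inequality already proved in \cite{gmt} (probabilistically, through the Feynman--Kac representation) and then drops the factor $p(t,x,x)m(x)\le 1$. You instead sketch a self-contained operator-theoretic argument: matrix Golden--Thompson on finite subgraphs, then an exhaustion. The finite-graph step is correct and clean; the pointwise bound $0\le\mathrm{e}^{-tH_{\Phi,0}}(x,x)\,m(x)\le\mathbf 1_x$ via semigroup domination is exactly the mechanism behind the paper's use of $p(t,x,x)m(x)\le 1$. Your approach, if completed, would make the inequality independent of the probabilistic machinery; the paper's (i.e.\ \cite{gmt}'s) route handles unbounded $\tilde V^+$ and locally infinite graphs uniformly, without needing any form-core statement.

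The gap you yourself flag in the exhaustion is real, and it is precisely where the two approaches diverge in difficulty. Monotone convergence for the decreasing sequence of Dirichlet-restricted forms gives strong resolvent convergence to the operator associated with the \emph{closure of $Q_{\Phi,\tilde V}$ restricted to $\Gamma_c(X,F)$}, and this coincides with $H_{\Phi,\tilde V}$ only if $\Gamma_c(X,F)$ is a form core for the KLMN sum $Q_{\Phi,0}+Q_{\tilde V}$; with unbounded $\tilde V^+$ this is not automatic from the definitions in the paper. A separate minor slip: $\sum_x\mathrm{e}^{-\tilde w(x)}<\infty$ does not in general imply $\sum_x\mathrm{e}^{-t\tilde w(x)}<\infty$ for $0<t<1$, so your finiteness claim for the right-hand side of (\ref{abs2}) needs care for small $t$. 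Your ``cheaper route'' through the Gronwall bound on $|\mathscr{A}^{\Phi,\tilde V}_t|_x$ is correct and is exactly the estimate the paper itself invokes later (Proposition~6~(i) of \cite{gmt}) in the proof of Theorem~\ref{main2}; it yields only $\nu\sum_x\mathrm{e}^{-t\tilde w(x)}$ rather than $\sum_x\mathrm{tr}_x(\mathrm{e}^{-t\tilde V(x)})$, but that is all the semiclassical application requires.
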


\begin{proof} Noting that $p(t,x,x)m(x)\leq 1$ by (\ref{deg1}), these results follow immediately from Theorem 4 in \cite{gmt}. 
\end{proof}

Now we are prepared for the proof of Theorem \ref{main2}:

\begin{proof}[Proof of Theorem \ref{main2}] Applying (\ref{abs2}) with $\tilde{V}=V/\hbar$, $\tilde{w}=w/\hbar$ (keeping again Remark \ref{fdg}.2 in mind) and $t=\beta\hbar$ directly gives (\ref{lim2}). In order to see (\ref{lim3}), we apply (\ref{for2}) with $\tilde{V}=V/\hbar$ and $t=\beta\hbar$ to get the first equality in
\begin{align}
\nn\mathrm{tr}(\mathrm{e}^{-\beta\hbar H_{\Phi,V/\hbar}})&=\sum_{x\in X}p(\beta\hbar,x,x) \mathbb{E}^{x,x}_{\beta\hbar}\left[\mathrm{tr}_x\big(\mathscr{A}^{\Phi,V/\hbar}_{\beta\hbar}\pa^{\Phi,-1}_{\beta\hbar}\big)\right]m(x)\\ 
\nn&=\sum_{x\in X}\mathbb{P}^x(\mathbb{X}_{\beta\hbar}=x) \mathbb{E}^{x,x}_{\beta\hbar}\left[\mathrm{tr}_x\big(\mathscr{A}^{\Phi,V/\hbar}_{\beta\hbar}\pa^{\Phi,-1}_{\beta\hbar}\big)\right]\\
\nn&=\sum_{x\in X}\mathbb{E}^{x}\left[1_{\{\mathbb{X}_{\beta\hbar}=x\}}\mathrm{tr}_x\big(\mathscr{A}^{\Phi,V/\hbar}_{\beta\hbar}\pa^{\Phi,-1}_{\beta\hbar}\big)\right]\\
\label{zerl}&=\sum_{x\in X}\mathbb{E}^{x}\left[1_{\{N(\beta\hbar)=0\}}\mathrm{tr}_x\big(\mathscr{A}^{\Phi,V/\hbar}_{\beta\hbar}\pa^{\Phi,-1}_{\beta\hbar}\big)\right]\\
\nn&\>\>\>+\sum_{x\in X}\mathbb{E}^{x}\left[1_{\{N(\beta\hbar)\geq 1\}}1_{\{\mathbb{X}_{\beta\hbar}=x\}}\mathrm{tr}_x\big(\mathscr{A}^{\Phi,V/\hbar}_{\beta\hbar}\pa^{\Phi,-1}_{\beta\hbar}\big)\right]
\end{align}
In view of
\begin{align*}
\mathbb{E}^{x}\left[1_{\{N(\beta\hbar)=0\}}\mathrm{tr}_x\big(\mathscr{A}^{\Phi,V/\hbar}_{\beta\hbar}\pa^{\Phi,-1}_{\beta\hbar}\big)\right]
=\mathbb{P}^{x}(N(\beta\hbar)=0)\mathrm{tr}_x(\mathrm{e}^{-\beta V(x)}),
\end{align*}
the first summand in (\ref{zerl}) tends to $\sum\mathrm{tr}_{\bullet}(\mathrm{e}^{-\beta V(\bullet)})$ as $\hbar \to 0+$, using (\ref{deg}) in combination with $\sum\mathrm{tr}_{\bullet}(\mathrm{e}^{-\beta V(\bullet)})<\infty$ and dominated convergence.\\
Finally, the second summand in (\ref{zerl}) tends to $0$ as $\hbar \to 0+$: To see this, recalling $\nu=\mathrm{rank}(F)$ and using Proposition 6 (i) in \cite{gmt} (a Gronwall type upper bound on the operator norm of $\mathscr{A}^{\Phi,V/\hbar}_{\beta\hbar}$) we can estimate as follows,
\begin{align*}
&\sum_{x\in X}\mathbb{E}^{x}\left[1_{\{N(\beta\hbar)\geq 1\}}1_{\{\mathbb{X}_{\beta\hbar}=x\}}\left|\mathrm{tr}_x\big(\mathscr{A}^{\Phi,V/\hbar}_{\beta\hbar}\pa^{\Phi,-1}_{\beta\hbar}\big)\right|\right]\\
\leq \>&\nu\sum_{x\in X}\mathbb{E}^{x}\left[1_{\{N(\beta\hbar)\geq 1\}}1_{\{\mathbb{X}_{\beta\hbar}=x\}}\left|\mathscr{A}^{\Phi,V/\hbar}_{\beta\hbar}\right|_x\right]\\
\leq \>& \nu \sum_{x\in X}\mathbb{E}^{x}\left[1_{\{N(\beta\hbar)\geq 1\}}1_{\{\mathbb{X}_{\beta\hbar}=x\}}\mathrm{e}^{-\f{1}{\hbar}\int^{\beta\hbar}_0  w(\mathbb{X}_s)\Id s}\right]\\
=\>& \nu \sum_{x\in X}\mathbb{E}^{x}\left[1_{\{\mathbb{X}_{\beta\hbar}=x\}}\mathrm{e}^{-\f{1}{\hbar}\int^{\beta\hbar}_0 w(\mathbb{X}_s)\Id s}\right]\\
&- \nu \sum_{x\in X}\mathbb{E}^{x}\left[1_{\{N(\beta\hbar)=0\}}\mathrm{e}^{-\f{1}{\hbar}\int^{\beta\hbar}_0 w(\mathbb{X}_s)\Id s}\right]\\
= \>& \nu \ \mathrm{tr}\big(\mathrm{e}^{-\hbar \beta H^{\mathrm{scal}}_{0,w/\hbar}}\big)-  \nu \sum_{x\in X}\mathrm{e}^{-\mathrm{deg}_{m,b}(x)\beta\hbar}\mathrm{e}^{- \beta w(x)},\\
\end{align*}
which goes to zero by Theorem \ref{main}, as the second summand goes to $\nu\sum\mathrm{e}^{- \beta w}$ by dominated convergence.  

\end{proof}

\end{document}